%
%
%
\RequirePackage{fix-cm}


 \def\bR{{\mathbf{R}}}
 \def\bZ{{\mathbf{Z}}}

\documentclass[smallextended]{svjour3}       
\smartqed  
\usepackage{graphicx}
%
%
%
%
\begin{document}
\title{Self-similar shapes under Errera division rule on the cone
}
%

\subtitle{Effect of meristem curvature on plant cell division}


\author{Etienne Couturier}

\institute{Laboratoire MSC, Universit\'e Paris-Diderot, Paris, France\\
              \email{etienne.couturier@univ-paris-diderot.fr}
}

\date{Received: date / Accepted: date}

\maketitle

\begin{abstract}
This study provides a general construction method of cell shape invariant by the Errera rule of division on a cone and provides analytical bounds for the apical angle of the cone on which these cells are connected and thus biologically meaningful. This idealized model highlights how the curvature of the tissue can influence Errera rule. 

\keywords{Fencing problem \and self-similar shape \and arithmetic \and plant cell division}
\end{abstract}

\section{Introduction}
\label{intro}
The empirical division rule formulated by Errera in 1886 \cite{Errera1886} states that a given cell will divide in two daughters of equal area while minimizing the length of the added perimeter; nevertheless exceptions to the rule were numerous and the rule was discarded. When dividing a cell in two daughters, lot of solutions coexist among which only one is a global minimum of added perimeter whereas the other are only local minima; lots of the aformentioned exceptions have recently been explained by generalizing the Errera rule saying local minima are represented following a probability law proportional to the relative quality of the minimum \cite{Besson2011}; this modified rule has been proved to be valid only on the convex part of \textit{A. thaliana} meristem failing to describe division where stresses are too anisotropic such as at the junction between meristem and primordium \cite{Louveaux2016}.\\
The associated minimization problem, coined the "fencing problem" by mathematicians, was first studied and solved in the plane in 1914 by Wiener \cite{Wiener1914},\cite{Wang2015}: for closed piecewise $C^1$ planar contour, the shortest dividing line is a constant curvature arc (arc of circle) orthogonal to the contour at its anchorages. In general, plant meristem are surfaces with positive gaussian curvature but the existing studies are all plane. As cones are isometric to a plane deprived from an angular sector, fencing solutions on the cone are also constituted by arcs of circles; though the Errera rule on the cone will lead to richer behaviours than on the plane as not all the cuts will lead to identical fencing solutions once mapped back on the cone. Motivated by the morphology on cells in a vegetal tissue (\cite{Couturier}), this article is interested in shapes whose one of the daughters is similar to the mother when dividing following the Errera rule; the focus is on the influence on the apical angle of the cone on these shapes. \\
The philosophy of the article is similar to the approach of Hofmeister and Van Iterson who provided the deepest understanding of phyllotaxy through two classical analytical models idealizing primordia on the meristem by tangent smaller disks on a bigger disk \cite{Hofmeister1868} or tangent disks on a cylinder \cite{Iterson1907}; these models are still guiding simulations and experiments. Similarly in order to understand how the coupling between the meristem curvature and the Errera rule influences division, the whole curvature of the meristem is concentrated in a sole point idealizing the meristem by a cone.
\section{Definitions}

\subsection{Basic definitions, notations and abbreviations}
\label{defi}
\begin{enumerate}
\item Let O the point of coordinate $(0,0)$.
\item Let $\arccos$ stands for the $[0,\pi]$ determination of the arc cosinus.
\item The positive orientation of the angle is counterclockwise.
\item Let $\theta_1$, $\theta_2$ in $\bR$, note $\left \lfloor{ \phi}\right \rfloor_{]\theta_1,\theta_2]}$ its $]\theta_1,\theta_2]$ determination.
\item Let  $\overrightarrow{a}$, $\overrightarrow{b}$ be two planar vectors. $(\overrightarrow{a},\overrightarrow{b})$ stands for the oriented angle between  $\overrightarrow{a}$ and $\overrightarrow{b}$ determined between $-\pi$ and $\pi$.
\item Let  $\overrightarrow{a}$, $\overrightarrow{b}$ be two consecutive sides of a given polygon. $(\overrightarrow{a},\overrightarrow{b})_i$ stands for the angle between  $\overrightarrow{a}$ and $\overrightarrow{b}$ interior to the polygon determined between $0$ and $2\pi$.
\item \textbf{A cone}: Let $\gamma>0$. A cone with $\gamma$ for planar angular span is defined herein by a smooth immersion $S$ from $\bR^*_+\times\bR/\gamma\bR$ to $\bR^3$
whose metric reads:
\begin{eqnarray*}
\left(\frac{\partial S}{\partial \rho}\right)^2=1,\ 
\left(\frac{\partial S}{\partial \rho}\right)\cdot\left(\frac{\partial S}{\partial \phi}\right)=0,\ 
\left(\frac{\partial S}{\partial \phi}\right)^2=\rho^2,
\end{eqnarray*}
\begin{eqnarray}
\label{Def_metrique}
\end{eqnarray}
$(\rho,\phi)$ being the generic element of $\bR^*_+\times\bR/\gamma\bR$.\\
For an axisymmetric cone, $\gamma$ is necessarily inferior to $2\pi$ and $\Theta$, the apical angle of the cone is linked to $\gamma$ by the relationship:
$\gamma=2\pi\sin(\frac{\Theta}{2})$.\\
$\alpha=2\pi-\gamma$ is coined the angular defect of the cone for $\alpha>0$ and the angular excess for $\alpha<0$.
\item \textbf{Constant geodesic curvature line}:\\
 $\mathcal{C}$, a constant geodesic curvature line (abbreviation CCL) on the cone is defined by the triplet $(R_0,D_0,\phi_0)$ where $R,\ D$ are two positive real numbers, and $\phi_0$ is an angular number of $\bR/\gamma\bR$. $R_0$ is the radius of geodesic curvature of the line.\\
\textbf{Finite perimeter CCL}: If $R_0<D_0$, the radial coordinate of $\mathcal{C}$ is the union of:
\begin{eqnarray*}
\forall \phi\in[\phi_0-\Delta\phi_0,\phi_0+\Delta\phi_0],&&\\
&&\rho_+(\phi)=D_0\cos(\phi-\phi_0)+\sqrt{R_0^2-D_0^2\sin(\phi-\phi_0)^2}\\
&&\rho_-(\phi)=D_0\cos(\phi-\phi_0)-\sqrt{R_0^2-D_0^2\sin(\phi-\phi_0)^2}
\end{eqnarray*}
$\Delta\phi_0$ standing for $\arcsin(\frac{R_0}{D_0})$.\\
$\mathcal{C}$ is given by the union of two lines defined for $\phi$ in $[\phi_0-\Delta\phi_0,\phi_0+\Delta\phi_0]$ by $S(\rho_+(\phi),\phi)$ and $S(\rho_-(\phi),\phi)$.\\
$\rho_+$ corresponds to the part of the curve which is the further from the apex, and $\rho_-$ corresponds to the part of the curve which is the nearer from the apex. $\mathcal{C}$ is a circle of the cone.\\
\textbf{Non-finite perimeter CCL}:  If $R_0\geq D_0$, the radial coordinate of $\mathcal{C}$ is 
\begin{eqnarray*}
\forall \phi\in ]-\infty,\infty[,&&\\
&&\rho_+(\phi)=D_0\cos(\phi-\phi_0)+\sqrt{R_0^2-D_0^2\sin(\phi-\phi_0)^2}
\end{eqnarray*}\\
$\mathcal{C}$ is given by:
\begin{eqnarray*}
\forall \phi \in ]-\infty,\infty[,\ S(\rho_+(\phi),\phi)
\end{eqnarray*}
As $\rho_+$ is $2\pi$-periodic and $S$ is $\gamma$-periodic for the second coordinate, the length of $\mathcal{C}$ is non-finite  and $\mathcal{C}$ is in general a curve dense inside the annulus centered at the apex whose minimal radius is $R_0-D_0$ and maximal radius $R_0+D_0$ unless either $\gamma$ is a rational mutiple of $2\pi$ or $R_0$ equals $D_0$.\\\\
\item \textbf{Piecewise constant curvature closed contour orthogonal at the corners (CCAOC)}: \\
Let $N$ a positive integer. Let $\mathcal{P}$, a $N$-sided closed contour of the cone constituted by $N$ constant curvature arcs (abbreviated by CCAs), $(\mathcal{A}_{i})_{i\in\bZ / N \bZ}$ counterclockwise oriented around the cone apex and positively orthogonal at their intersection. Let $s_{\mathcal{P}}$ be a counterclockwise oriented natural parametrization of $\mathcal{P}$. $\mathcal{P}$ is unequivocally defined by:
\begin{enumerate}
\item The triplets $(R_i,D_i,\phi_i)_{i\in\bZ / N \bZ}$ specifying the CCL $(\mathcal{C}_i)_{i\in\bZ / N \bZ}$ on which lay the $(\mathcal{A}_{i})_{i\in\bZ / N \bZ}$.
\item The vertices $(P_i)_{i\in\bZ / N \bZ}$ of coordinates $(\rho_{P_i},\phi_{P_i})_{i\in\bZ / N \bZ}$. In this article, the convention is: 
$$P_i\in \mathcal{A}_{i}\bigcap\mathcal{A}_{i+1}.$$
\end{enumerate}
For a given $i$ in $\bZ / N \bZ$, the condition of orthogonality reads for $\mathcal{A}_{i}$ and $\mathcal{A}_{i+1}$ which are forming a $\frac{\pi}{2}$ oriented angle at their intersection(s):
\begin{eqnarray}
\left(\mathcal{A}_i^\prime(s^-_{P_i}), \mathcal{A}_{i+1}^\prime(s^+_{P_i})\right)=\frac{\pi}{2}\label{orth90}.
\end{eqnarray}
Here is a list of notation associated to the CCAOC $\mathcal{P}$ used in the text (The index $\mathcal{P}$ is omitted in the text when possible to alleviate the notations):
\begin{enumerate}
\item Let $l_{\mathcal{P}}$, the perimeter of $\mathcal{P}$.
\item Let $(\rho_{\mathcal{P}},\phi_{\mathcal{P}})$ be the closed contour in the coordinate space $\bR^*_+\times\bR/\gamma\bR$ corresponding to $\mathcal{P}$.
\item Let $(\delta\phi_j)_{j\in \bZ / N \bZ}$ defined by:
\begin{eqnarray}
\forall j\in\bZ / N \bZ, \delta\phi_j=\phi_{j+1}-\phi_j.\label{Def_deltaphi}
\end{eqnarray}
 \item Let $\sigma_{\mathcal{P}}$ the permutation associating the ascending order of the curvature radii of the $(\mathcal{A}_{i})_{i\in\bZ / N \bZ}$ to the index $i$ of $\bZ / N \bZ$.
\item Let $(o_i)_{i\in\bZ / N \bZ}$ in $\{-1,1\}^N$ be the orientation of $(\mathcal{A}_{i})_{i\in\bZ / N \bZ}$: For $i$ in $\bZ / N \bZ$, $o_i$ equals $-1$ for a clockwise rotation around the curvature center of $\mathcal{A}_{i}$ and $o_i$ equals $1$ for a counterclockwise rotation around the curvature center.
\item Let $\mathcal{P}^{*}$ the unclosed contour of the plane defined for $m$ in $\bZ$ and $s$ in $]ml_{\mathcal{P}},(m+1)l_{\mathcal{P}}]$ by:
\begin{eqnarray}
\mathcal{P}^{*}(s)=\rho_{\mathcal{P}}(s^*)(\cos(\phi_{\mathcal{P}}(s^*)+m\gamma),\sin(\phi_{\mathcal{P}}(s^*)+m\gamma))\label{P_plan}
\end{eqnarray}
with:
\begin{eqnarray*}
s^*=s-ml_{\mathcal{P}}.
\end{eqnarray*}
If $\gamma$ and $2\pi$ are not rational multiple, $\mathcal{P}^{*}$ is generically an unclosed infinite contour of the plane. $\mathcal{P}$ and $\mathcal{P}^{*}([0,\gamma])$ are isometric.\\
We also note for $j$ in $\{0,\cdots,N-1\}$ and $m$ in $\bZ$, $\mathcal{C}^*_{j+mN}$ (resp. $\mathcal{A}^*_{j+mN}$) the CCL and CCA of $\mathcal{P}^*$ image of $\mathcal{C}_{j}$ (resp. $\mathcal{A}_{j}$) rotated with an angle $m\gamma$ around $O$. For example,
$\mathcal{A}^*_{j+mN}$ is defined for $s$ in $[s_{P_{j}}+ml_{\mathcal{P}},s_{P_{j+1}}+ml_{\mathcal{P}}]$ by:
\begin{eqnarray}
\mathcal{A}^*_{j+mN}(s)=\rho_{\mathcal{A}_{j}}(s^*)(\cos(\phi_{\mathcal{P}}(s^*)+m\gamma),\sin(\phi_{\mathcal{P}}(s^*)+m\gamma)).\label{A_plan}
\end{eqnarray}
with:
\begin{eqnarray*}
s^*=s-ml_{\mathcal{P}}.
\end{eqnarray*}
Let define:
$$\delta\phi^*_{j+mN}=\left \lfloor{\delta\phi_j}\right \rfloor_{[-\pi,\pi]}.$$
\end{enumerate}
\item  \textbf{Halving solution (HS)}:
A CCA halving the area of a CCAOC while being either orthogonal to the contour of the CCAOC, or not touching the contour of CCAOC is coined an halving solution, abbreviated by HS.

\end{enumerate}
\section{Self-similar solution under Errera rule of division}
The theorem is proved using a technical Lemma in annex deriving some properties of CCAOCs whose the distances from the curvature centers of each CCA to the cone apex and their radii of curvature share a common ratio, $\rho$. The theorem both proves the CCAs of a self-similar $N$-sided CCAOC ($N$ being an integer) share a constant $\rho$ and the radii of curvature of each CCA are distributed around the CCAOC abiding to an invertible arithmetic sequence in $\bZ/N\bZ$. Results of the lemma are then used to determine the geometric properties of the self-similar CCAOC induced by such a permutation when two sides become tangent.

\begin{theorem}
Let $N>0$ an integer. Let $\mathcal{P}$ and $\mathcal{P}_{\frac{1}{2}}$ two $N$-sided CCAOCs on a cone of planar angular span $\gamma$. They are constituted by the arcs $(\mathcal{A}_{j})_{j\in\bZ / N \bZ}$ (\textit{resp.}  $(\mathcal{A}_{j,\frac{1}{2}})_{j\in\bZ / N \bZ}$) laying on the CCLs, $(\mathcal{C}_{j})_{j\in\bZ / N \bZ}$, (\textit{resp.}  $(\mathcal{C}_{j,\frac{1}{2}})_{j\in\bZ / N \bZ}$) whose parameters are $(R_j,D_j,\phi_j)_{j\in \bZ / N \bZ}$ (\textit{resp.}  $(R_{j,\frac{1}{2}},D_{j,\frac{1}{2}},\phi_{j,\frac{1}{2}})_{j\in \bZ / N \bZ}$).\\
Let suppose $\mathcal{P}_{\frac{1}{2}}$ is the daughter of $\mathcal{P}$ containing the apex of the cone and corresponding to an halving solution (HS). Let also suppose $\mathcal{P}_{\frac{1}{2}}$ can be mapped on $\mathcal{P}$ by $h$ a combination of an homothety and a rotation centered at the apex.\\
\begin{enumerate}
\item Then:
\begin{eqnarray}
\forall j\in\bZ / N \bZ,R_{\sigma(j)}&=&(\sqrt{2})^j R_{\sigma(0)},\label{eqR}\\
 D_{\sigma(j)}&=&(\sqrt{2})^j D_{\sigma(0)}\label{eqD}
\end{eqnarray}
$\sigma$ being an arithmetic progression of $\bZ / N \bZ$ whose common difference $k$ is prime with $N$. The maximum number of self-similar shape with orientation is thus the Euler totient function $\phi(N)$.
\item $(\delta\phi_j)_{j\in \bZ/N\bZ}$ takes $k^\prime$ times the value $\delta\phi_{\sigma(-1)}$ and  $N-k^\prime$ times the value $\delta\phi_{\sigma(0)}$ where $k^\prime$ is an inverse of $k$ in $\bZ / N \bZ$. More precisely:
\begin{itemize}
\item for $j$ in  $\sigma\left(\{0,\cdots,N-k^\prime-1\}\right)$:
\begin{eqnarray}
\delta\phi_{j}=\delta\phi_{\sigma(0)}\label{valdphi1};
\end{eqnarray}
\item for $j$ in $\sigma\left(\{N-k^\prime,\cdots,N-1\}\right)$:
\begin{eqnarray}
\delta\phi_{j}=\delta\phi_{\sigma(-1)}.\label{valdphi2}
\end{eqnarray}
\end{itemize}
\item Let $$\rho=\frac{R_0^2}{D_0^2}.$$
The $(\delta\phi^*_j)_{j\in \bZ/N\bZ}$ are given by the formula:
\begin{itemize}
\item for $j$ in $\sigma(\{0,\cdots,N-k^\prime-1\})$:
\begin{eqnarray}
\delta\phi^*_j=\epsilon_{\sigma(0)}\arccos\left((1-\rho)\frac{(\sqrt{2})^{k^\prime}+(\sqrt{2})^{-k^\prime}}{2}\right)\label{expphi1100};
\end{eqnarray} 
\item for $j$ in $\sigma(\{N-k^\prime,\cdots,N-1\})$
\begin{eqnarray}
\delta\phi^*_j=\epsilon_{\sigma(-1)}\arccos\left((1-\rho)\frac{(\sqrt{2})^{N-k^\prime}+(\sqrt{2})^{k^\prime-N}}{2}\right).\label{expphi1200}
\end{eqnarray} 
\end{itemize}
\item Let $i\in\bZ / N \bZ$. Let suppose there exists $i$ in $\bZ / N \bZ$ such as $\mathcal{A}_{i}$ is tangent to $\mathcal{A}_{i+1}$. It implies $\rho$ belongs to 
$$\{0,1-\frac{\varepsilon_{i,i+2}}{\tau_{i,i+1}\tau_{i+1,i+2}},1\},$$
with $\varepsilon_{i,i+2}=1$ in case of external tangency or $\varepsilon_{i,i+2}=-1$ in case of internal tangency.
\begin{itemize}
\item In the case,
\begin{eqnarray*}
\rho=0,
\end{eqnarray*}
$\gamma$ is not a real number.
\item In the case,
\begin{eqnarray*}
\rho=1-\frac{\varepsilon_{i,i+2}}{\tau_{i,i+1}\tau_{i+1,i+2}}, 
\end{eqnarray*}
there exists a positive integer $m$ such as
if neither $i$, nor $i+1$ belong to $\sigma(\{N-k^\prime,\cdots,N-1\})$:
\begin{eqnarray*}
\gamma=&&(N-k^\prime)\left(\pi(1-\epsilon_{\sigma(0)})+\epsilon_{\sigma(0)}\arccos\left(\frac{2\varepsilon_{i,i+2}}{(\sqrt{2})^{-k^\prime}+(\sqrt{2})^{k^\prime}}\right)\right)\\&&+k^\prime\left(\pi(1-\epsilon_{\sigma(-1)})+\epsilon_{\sigma(-1)} \arccos\left(\frac{2\varepsilon_{i,i+2}((\sqrt{2})^{N-k^\prime}+(\sqrt{2})^{k^\prime-N})}{((\sqrt{2})^{k^\prime}+(\sqrt{2})^{-k^\prime})^2}\right)\right)+2m\pi
\end{eqnarray*}
else:
\begin{eqnarray*}
\gamma=(N-k^\prime)\left(\pi(1-\epsilon_{\sigma(0)})+\epsilon_{\sigma(0)} \arccos\left(\frac{2\varepsilon_{i,i+2}}{(\sqrt{2})^{N-k^\prime}+(\sqrt{2})^{k^\prime-N}}\right)\right)\\
+k^\prime\left(\pi(1-\epsilon_{\sigma(-1)})+\epsilon_{\sigma(-1)}\arccos\left(\frac{2\varepsilon_{i,i+2}}{(\sqrt{2})^{-k^\prime}+(\sqrt{2})^{k^\prime}}\right)\right)+2m\pi
\end{eqnarray*}
with:
\begin{eqnarray}
\epsilon_i=sign\left(\varepsilon_{i,i+2}\left(R_{i+1}^2-R_i^2\right)\left(R_{i+1}^2-R_{i+2}^2\right)\right)\epsilon_{i+1}.
\end{eqnarray}
\item In the case,
\begin{eqnarray*}
\rho=1, 
\end{eqnarray*}
there exists a positive integer $m$ such as:
\begin{eqnarray*}
\gamma=(N-k^\prime)\left(\pi(1-\epsilon_{\sigma(0)})+\epsilon_{\sigma(0)}\frac{\pi}{2}\right)+k^\prime\left(\pi(1-\epsilon_{\sigma(-1)})+\epsilon_{\sigma(-1)} \frac{\pi}{2}\right)+2m\pi.
\end{eqnarray*}
\end{itemize}
\end{enumerate}
\end{theorem}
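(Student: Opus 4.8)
The plan is to extract every conclusion from the single rigidity hypothesis that the inner daughter $\mathcal{P}_{\frac{1}{2}}$ equals $h(\mathcal{P})$ for a similarity $h$ centred at the apex, by tracking how $h$ permutes the defining CCLs. First I would pin down the scale of $h$. Since $\mathcal{P}_{\frac{1}{2}}$ is a halving solution its area is half that of $\mathcal{P}$, and a similarity scales areas by the square of its ratio, so the homothety ratio is forced to be $\frac{1}{\sqrt{2}}$, with $h$ also rotating by some angle $\beta$. Being centred at the apex, $h$ sends a CCL of parameters $(R,D,\phi)$ to the CCL $(\frac{R}{\sqrt{2}},\frac{D}{\sqrt{2}},\phi+\beta)$; this elementary transformation law is the engine of the whole argument.

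To obtain the first assertion I would compare the two descriptions of the CCLs of $\mathcal{P}_{\frac{1}{2}}$: they are simultaneously the images $h(\mathcal{C}_j)$ and the $N-1$ innermost CCLs of $\mathcal{P}$ together with the single new CCL carrying the halving arc. Matching curvature radii, the multiset $\{R_j/\sqrt{2}\}$ must coincide with $N-1$ of the $R_j$ plus one new value; ordering the radii by $\sigma$ this forces them into a geometric progression of ratio $\sqrt{2}$, which is (\ref{eqR})--(\ref{eqD}), the new radius being the smallest one $h(\mathcal{C}_{\sigma(0)})$ and the discarded one the outermost $\mathcal{C}_{\sigma(N-1)}$. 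Consequently $h$ carries the rank-$j$ CCL onto the rank-$(j-1)$ CCL. Because $h$ is a homeomorphism of the contour it preserves the adjacency of arcs, so the geometric neighbour relation, transported to the radius ranks, is invariant under the shift $j\mapsto j-1$; an adjacency that is invariant under a global shift must join each rank to a rank a fixed distance away, whence the rank of the geometric position is an arithmetic progression and so is $\sigma$. Its common difference $k$ must be invertible modulo $N$ for $\sigma$ to be a bijection, and counting the admissible differences gives Euler's $\phi(N)$.

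The second and third assertions then follow almost mechanically. Writing $\sigma^{-1}$, itself arithmetic with common difference an inverse $k^\prime$ of $k$, the integer jump $\sigma^{-1}(i+1)-\sigma^{-1}(i)$ equals $k^\prime$ when $\sigma^{-1}(i)\le N-k^\prime-1$ and equals $k^\prime-N$ when a wrap occurs; counting wraps around the cycle yields exactly $k^\prime$ of the latter and $N-k^\prime$ of the former, which is (\ref{valdphi1})--(\ref{valdphi2}). For the explicit values I would invoke the annex lemma: two geometrically consecutive arcs meet orthogonally, so in the unrolled plane their supporting circles are orthogonal and the squared distance between centres is $R_i^2+R_{i+1}^2$; expanding the same distance as $D_i^2+D_{i+1}^2-2D_iD_{i+1}\cos\delta\phi^*_i$ and using the common ratio $R^2=\rho D^2$ gives $\cos\delta\phi^*_i=(1-\rho)\frac{D_i^2+D_{i+1}^2}{2D_iD_{i+1}}$. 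Substituting the two radius ratios $D_{i+1}/D_i=(\sqrt{2})^{k^\prime}$ and $(\sqrt{2})^{k^\prime-N}$ coming from the non-wrap and wrap steps produces (\ref{expphi1100}) and (\ref{expphi1200}), the sign $\epsilon$ recording the orientation of the determination.

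Finally, for the tangency statement I would replace the orthogonality relation at the vanishing intermediate arc by the tangency condition $d=R\pm R'$, the sign being recorded by $\varepsilon_{i,i+2}$, and feed the lemma's expression for the separation of the two outer centres, written through $\rho$, $\tau_{i,i+1}$ and $\tau_{i+1,i+2}$, into this condition. This yields a polynomial in $\rho$ whose roots are exactly $0$, $1-\frac{\varepsilon_{i,i+2}}{\tau_{i,i+1}\tau_{i+1,i+2}}$ and $1$. The root $\rho=0$ forces $R=0$ and no genuine cone, so $\gamma$ is not real; the root $\rho=1$ gives $R=D$, a non-finite CCL, for which $\cos\delta\phi^*=0$ and every recovered increment is $\frac{\pi}{2}$ up to orientation; the middle root is substituted back into the Part~3 formulas. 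In each surviving case I would close the contour by imposing that the genuine increments sum to the total cone angle, $\gamma=\sum_j\delta\phi_j+2m\pi$, with $N-k^\prime$ non-wrap and $k^\prime$ wrap terms each written as $\pi(1-\epsilon)+\epsilon\arccos(\cdots)$; the two sub-cases for the middle root correspond to whether the tangent pair lies in the wrap set, which fixes which $\arccos$ argument attaches to its neighbours. I expect the main obstacle to be the first assertion, namely turning the shift-invariance of the adjacency into a rigorous proof that $\sigma$ is arithmetic with common difference prime to $N$ while correctly accounting for the single defect introduced by the halving arc, together with the sign and winding bookkeeping ($\epsilon_i$, $\varepsilon_{i,i+2}$ and the integer $m$) needed to assemble the closed forms of $\gamma$.
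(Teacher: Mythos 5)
Your overall route coincides with the paper's: the ratio $1/\sqrt{2}$ forced by area halving, identification of the daughter's CCL multiset with the mother's minus the outermost arc plus the new halving arc, the resulting geometric progression of radii and arithmetic permutation $\sigma$ with common difference invertible mod $N$, the annex lemma's orthogonality relation $\cos\delta\phi_j=(1-\rho)\tau_{j,j+1}$ specialised through the wrap/non-wrap dichotomy of $\sigma^{-1}$, the cubic in $1-\rho$ coming from the tangency condition, and a Gauss--Bonnet closure for $\gamma$. Parts 1, 3 and 4 of your sketch match the paper's proof step for step.

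There is, however, one concrete gap, in part 2. You derive the two-value distribution of the $(\delta\phi_j)$ by counting wraps of $\sigma^{-1}$ and feeding the resulting radius ratios into the cosine formula. That only determines $\cos\delta\phi_j$, hence $\delta\phi^*_j$ up to the sign $\epsilon_j$; it proves neither that the oriented angle $\delta\phi_j$ itself is equal on each of the two classes nor that $\epsilon_j$ is constant on each class --- and that constancy is exactly what is needed later to collapse $\sum_j\big(\pi(1-\epsilon_j)+\delta\phi^*_j\big)$ into the two-term closed forms for $\gamma$. The paper obtains this from a stronger observation that you already have in hand but do not exploit: since $h$ is a rotation-homothety centred at the apex it sends the angular parameter $\phi_j$ to $\phi_j+\beta$, so $h(\mathcal{C}_j)=\mathcal{C}_{j-k}$ forces the exact oriented equality $\delta\phi_{j-k}=\delta\phi_j$ whenever neither $j$ nor $j+1$ equals $\sigma(0)$, the index where the new halving arc is inserted; iterating along the two chains into which the shift-by-$k$ orbit is cut by those two exceptional indices yields the two classes with their full angles, signs included. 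You should substitute this isometry argument for the wrap count, which then reappears legitimately only in part 3 to evaluate $\kappa_j$. A second, smaller soft spot: the closure relation $\gamma=\sum_j\delta\phi_j+2m\pi$ together with $\delta\phi_j=\pi(1-\epsilon_j)+\delta\phi^*_j$ is not immediate; the paper needs the quadrangle and winding-number bookkeeping of the annex lemma's third part to justify the $\pi(1-\epsilon_j)$ correction and the origin of $m$ (complete turns of the unbounded CCLs when $\rho\le 1$). You correctly identify this as an obstacle but do not resolve it.
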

\begin{proof}
\begin{enumerate}
\item $\mathcal{P}$ and $\mathcal{P}_{\frac{1}{2}}$ are self-similar thus there exists $h$ a combination between an homothety and a rotation such as:
\begin{eqnarray*}
h(\mathcal{P})=\mathcal{P}_{\frac{1}{2}}.
\end{eqnarray*}
Let index the CCLs of $\mathcal{P}_{\frac{1}{2}}$ in the counterclockwise order such as  $\mathcal{C}_{0,\frac{1}{2}}$ the first CCL of the indexing is the image by $h$ of $\mathcal{C}_{0}$  the first CCL of the indexing of $\mathcal{P}$.\\
As the area of the mother is twice the area of the daughter, the ratio of the homothety-rotation $h$ is necessary $\sqrt{2}$. 
Because of the homothety ratio $\sqrt{2}$ between $\mathcal{P}$ and $\mathcal{P}_{\frac{1}{2}}$, the following relations are verified between the parameter of $\mathcal{P}$ and $\mathcal{P}_{\frac{1}{2}}$:
\begin{eqnarray}
\forall j\in \{0,\cdots,n-1\},R_{\sigma(j),\frac{1}{2}}&=&\frac{R_{\sigma(j)}}{\sqrt{2}};\\
 D_{\sigma(j),\frac{1}{2}}&=&\frac{D_{\sigma(j)}}{\sqrt{2}}.\label{opoppp}
\end{eqnarray}
The division conserves the number of CCAs, thus only one CCA of $\mathcal{P}$ disappears and only one new CCA of $\mathcal{P}_{\frac{1}{2}}$ is created. The CCA which disappears is the one of highest radius of curvature $R_{\sigma(N-1)}$ and the one which is appearing is the one of lowest radius of curvature $\frac{R_{\sigma(0)}}{\sqrt{2}}$; all the other CCAs are wholly conserved during the division except the anchorage CCAs which are only partly conserved. The CCL of lower radius of curvature of $\mathcal{P}_{\frac{1}{2}}$ is the CCL corresponding to the halving solution $\mathcal{C}_{HS}$ while for $j$ in $\{1,\cdots,N-1\}$, the $j^{th}$ CCL of $\mathcal{P}_{\frac{1}{2}}$ for the radius of curvature corresponds to the $(j-1)^{th}$ CCL of $\mathcal{P}$ for the radius of curvature which reads:
\begin{eqnarray}
\mathcal{C}_{\sigma(0),\frac{1}{2}}&=&\mathcal{C}_{HS};\label{eqcote0}\\
\forall j\in \bZ / N \bZ\setminus\{0\},\mathcal{C}_{\sigma(j),\frac{1}{2}}&=&\mathcal{C}_{\sigma(j-1)}\label{eqcotej}
\end{eqnarray}
which implies the following equalities between the radii of curvature: 
\begin{eqnarray}
\forall j\in \bZ / N \bZ,R_{\sigma(j),\frac{1}{2}}&=&R_{\sigma(j-1)};\\
				      D_{\sigma(j),\frac{1}{2}}&=&D_{\sigma(j-1)}.\label{opopppo}
\end{eqnarray}
(\ref{opoppp}) combined with (\ref{opopppo}) gives:
\begin{eqnarray}
\forall j\in \bZ / N \bZ,\frac{R_{\sigma(j)}}{\sqrt{2}}&=&R_{\sigma (j-1)},\\
\frac{D_{\sigma(j)}}{\sqrt{2}}&=&D_{\sigma (j-1)}.
\end{eqnarray}
The (\ref{eqR},\ref{eqD}) formula are easily obtained by induction on $j$:
\begin{eqnarray}
\forall j\in\bZ / N \bZ,R_{\sigma(j)}&=&(\sqrt{2})^j R_{\sigma(0)};\\
				      D_{\sigma(j)}&=&(\sqrt{2})^j D_{\sigma(0)}.
\end{eqnarray}
The permutation $\sigma$ of the mother induces a permutation $\sigma_{\frac{1}{2}}$ for the daughter associating the ascending order of the radii of curvature of the daughter to some counterclockwise indexing of its sides: for $j$ in $\{0,\cdots,N-2\}$, the counterclockwise index of the $\mathcal{P}$ CCL whose radius of curvature is the $(j)^{th}$ in ascending order is inherited by the $\mathcal{P}_{\frac{1}{2}}$ CCL whose radius of curvature is the $(j+1)^{th}$ and the counterclockwise index of the $\mathcal{P}$ CCL whose radius of curvature is the highest (the $(N-1)^{th}$) is inherited by the newly created halving solution which corresponds to the $\mathcal{P}_{\frac{1}{2}}$ CCL whose curvature radius is the lowest (the $0^{th}$). This definition can be formally written:
\begin{eqnarray}
\forall j\in \bZ / N \bZ,\sigma_{\frac{1}{2}}(j)=\sigma(j-1).\label{oupipoliu}
\end{eqnarray}
The trigonometric order induced by the homothety-rotation $h$ on $\mathcal{P}_{\frac{1}{2}}$ is equal up to a circular permutation of offset $k$ (corresponding to the possible difference in the origin of the counterclockwise indexing) which implies:
\begin{eqnarray}
\sigma=\sigma_{\frac{1}{2}}+k.\label{oupipolio}
\end{eqnarray}
Combining (\ref{oupipoliu}) and (\ref{oupipolio}) yields:
\begin{eqnarray}
\forall j\in \bZ / N \bZ,\sigma(j)=\sigma(j-1)+k.\label{diffarithmetique}
\end{eqnarray}
(\ref{diffarithmetique}) implies $\sigma$ is an arithmetic progression of common difference $k$ which reads:
\begin{eqnarray}
\forall j\in \bZ / N \bZ,\sigma(j)=\sigma(0)+jk.\label{arithmetique}
\end{eqnarray}
In particular, (\ref{arithmetique}) implies:
\begin{eqnarray}
\sigma(\sigma^{-1}(\sigma(0)+1))&=& \sigma(0)+(\sigma^{-1}(1+\sigma(0)))k.\label{parti} 
\end{eqnarray}
The left side of (\ref{parti}) can be rewritten:
\begin{eqnarray}
\sigma(\sigma^{-1}(\sigma(0)+1))&=&\sigma(0)+1\label{left}
\end{eqnarray}
substituting (\ref{left}) on the left of (\ref{parti}) gives:
\begin{eqnarray}
\sigma(0)+1&=& \sigma(0)+(\sigma^{-1}(1+\sigma(0)))k.\label{eqooppp}
\end{eqnarray}
Canceling $\sigma(0)$ on both sides of (\ref{eqooppp}) yields:
\begin{eqnarray*}
1&=&(\sigma^{-1}(\sigma(0)+1))k
\end{eqnarray*}
which implies $k$ is invertible in $\bZ / N \bZ$ and thus prime with $N$.\\\\

\item Let $j\in \bZ/N\bZ\setminus\{\sigma(0),\sigma(0)-1\}$. By choice of the indexing of $\mathcal{P}_{\frac{1}{2}}$, for any $j$ in $\bZ / N \bZ$
\begin{eqnarray*}
h(\mathcal{C}_{j})&=&\mathcal{C}_{j,\frac{1}{2}},\\ h(\mathcal{C}_{j+1})&=&\mathcal{C}_{j+1,\frac{1}{2}}.
\end{eqnarray*}
\begin{eqnarray}
\label{eqhj}
\end{eqnarray}
As neither $j=\sigma(0)$ nor  $j+1=\sigma(0)$, (\ref{eqcotej}) implies:
\begin{eqnarray*}
\mathcal{C}_{j,\frac{1}{2}}&=&\mathcal{C}_{\sigma(\sigma^{-1}(j)-1)},\\
\mathcal{C}_{j+1,\frac{1}{2}}&=&\mathcal{C}_{\sigma(\sigma^{-1}(j+1)-1)}.
\end{eqnarray*}
Using (\ref{diffarithmetique}), it rewrites: 
\begin{eqnarray*}
\mathcal{C}_{j,\frac{1}{2}}&=&\mathcal{C}_{\sigma(\sigma^{-1}(j))-k},\\
\mathcal{C}_{j+1,\frac{1}{2}}&=&\mathcal{C}_{\sigma(\sigma^{-1}(j+1))-k},
\end{eqnarray*}
which simplifies into:
\begin{eqnarray*}
\mathcal{C}_{j,\frac{1}{2}}&=&\mathcal{C}_{j-k},\\
\mathcal{C}_{j+1,\frac{1}{2}}&=&\mathcal{C}_{j+1-k}.
\end{eqnarray*}
Combining with (\ref{eqhj}), it gives:
\begin{eqnarray*}
h(\mathcal{C}_{j})&=&\mathcal{C}_{j-k},\\
h(\mathcal{C}_{j+1})&=&\mathcal{C}_{j+1-k}.
\end{eqnarray*}
As $h$ is an isometric transformation, the angle are conserved thus implying: 
\begin{eqnarray}
\delta\phi_{j}=\delta\phi_{j-k}\label{equallol}.
\end{eqnarray}
Let $j$ in $$\sigma(0) +k\{1,\cdots,N-k^\prime-1\},$$ $j+1$ can be decomposed in
\begin{eqnarray}
j+1=j+kk^\prime,
\end{eqnarray}
thus $j+1$ belongs to
\begin{eqnarray}
\sigma(0) +k\{1+k^\prime,\cdots,N-1\}
\end{eqnarray}
and is always different from $\sigma(0)$. (\ref{equallol}) can iteratively be applied, implying by transitivity of the equality relationship for $j$ in 
$$\sigma(\{0,\cdots,N-k^\prime-1\})=\sigma(0) +k\{0,\cdots,N-k^\prime-1\},$$
\begin{eqnarray}
\delta\phi_{j}=\delta\phi_{\sigma(0)}\label{equa0mpp}.
\end{eqnarray}
Let $j$ in $$\sigma(0) +k\{N-k^\prime+1,\cdots,N-1\},$$ $j+1$ can be decomposed in
\begin{eqnarray}
j+1=j+kk^\prime,
\end{eqnarray}
thus $j+1$ belongs to
\begin{eqnarray}
\sigma(0) +k\{1,\cdots,k^\prime-1\}
\end{eqnarray}
and is always different from $\sigma(0)$. (\ref{equallol}) can iteratively be applied, implying by transitivity of the equality relationship for $j$ in 
$$\sigma(\{N-k^\prime,\cdots,N-1\})=\sigma(0) +k\{N-k^\prime,\cdots,N-1\},$$
\begin{eqnarray}
\delta\phi_{j}=\delta\phi_{\sigma(-1)}\label{equa0mpp2}.
\end{eqnarray}
Finally $(\delta\phi_j)_{j\in \bZ/N\bZ}$ takes $k^\prime$ times the value $\delta\phi_{\sigma(-1)}$ and  $N-k^\prime$ times the value $\delta\phi_{\sigma(0)}$ which finishes to prove (\ref{valdphi1},\ref{valdphi2}).

\item For any $j$ in $\bZ / N \bZ$, the ratio between (\ref{eqR},\ref{eqD}) formula read:
\begin{eqnarray*}
\frac{R_{\sigma(j)}}{D_{\sigma(j)}}&=&\frac{(\sqrt{2})^j R_{\sigma(0)}}{(\sqrt{2})^j D_{\sigma(0)}}
\end{eqnarray*}
which simplifies into:
\begin{eqnarray*}
\frac{R_{\sigma(j)}}{D_{\sigma(j)}}&=&\frac{R_{\sigma(0)}}{D_{\sigma(0)}}.
\end{eqnarray*}
As $\sigma$ is a permutation of $\bZ / N \bZ$, for any $j$:
\begin{eqnarray*}
\frac{R_{j}}{D_{j}}=\sqrt{\rho}.
\end{eqnarray*}
The hypothesis of Lemma \ref{lm:casbase2}.1 are verified thus for any $j$ in $\bZ / N \bZ$,
\begin{eqnarray}
\cos(\delta\phi_j)=(1-\rho)\tau_{j,j+1}
\end{eqnarray}
which can be rewritten using the notation (\ref{okolo}):
 \begin{eqnarray}
\cos(\delta\phi_{j})&=&(1-\rho)\frac{\kappa_j+\frac{1}{\kappa_j}}{2}\label{coskappa}
\end{eqnarray}
substituting (\ref{opoppp}) in (\ref{okolo}), we obtain:
\begin{eqnarray*}
\kappa_j&=&\frac{R_0(\sqrt{2})^{\sigma^{-1}(j)}}{R_0(\sqrt{2})^{\sigma^{-1}(j+1)}}
\end{eqnarray*}
which we rewrite:
\begin{eqnarray}
\kappa_j&=&(\sqrt{2})^{Inj(\sigma^{-1}(j))-_{\bZ}Inj(\sigma^{-1}(j+1))}\label{dlout}
\end{eqnarray}
$Inj$, being the canonical injection from $\bZ /N \bZ$ to $\bZ$, $-_{\bZ}$ is the difference in $\bZ$.\\
As $\sigma^{-1}$ reads:
\begin{eqnarray*}
\forall j \in \bZ /N \bZ,\ \sigma^{-1}(j)=k^\prime j-k^\prime\sigma(0),
\end{eqnarray*}
if $j$ lays in $\sigma(\{N-k^\prime,\cdots,N-1\}),$
$\sigma^{-1}(j)$ lays in $\{N-k^\prime,\cdots,N-1\}$, it implies:
$$Inj(\sigma^{-1}(j)+k^\prime)=Inj(\sigma^{-1}(j))+_{\bZ}k^\prime-N,$$
thus (\ref{dlout}) implies:
\begin{eqnarray}
\kappa_j&=&(\sqrt{2})^{N-k^\prime}\label{e_cas1}
\end{eqnarray}
else if $j$ does not lay in $\sigma(\{N-k^\prime,\cdots,N-1\}),$ it implies:
$$Inj(\sigma^{-1}(j)+k^\prime)=Inj(\sigma^{-1}(j))+_{\bZ}k^\prime$$
and (\ref{dlout}) implies:
\begin{eqnarray}
\kappa_j&=&(\sqrt{2})^{k^\prime}.\label{e_cas2}
\end{eqnarray}
Substituting in (\ref{coskappa}) with (\ref{e_cas1}) and (\ref{e_cas2}) gives:
\begin{itemize}
\item For $j$ in $\sigma(\{0,\cdots,N-k^\prime-1\})$, 
\begin{eqnarray*}
\cos(\delta\phi_j)=\rho\frac{(\sqrt{2})^{k^\prime}+(\sqrt{2})^{-k^\prime}}{2}
\end{eqnarray*}
\item For $j$ in $\sigma(\{N-k^\prime,\cdots,N-1\})$,
\begin{eqnarray*}
\cos(\delta\phi_j)=\rho\frac{(\sqrt{2})^{N-k^\prime}+(\sqrt{2})^{k^\prime-N}}{2}.
\end{eqnarray*}
\end{itemize}
It implies for $j$ in $\sigma(\{0,\cdots,N-k^\prime-1\})$ there exists $\epsilon_j$ in $\{-1,1\}$ such as:
\begin{eqnarray}
\delta\phi^*_j=\epsilon_j\arccos\left(\rho\frac{(\sqrt{2})^{k^\prime}+(\sqrt{2})^{-k^\prime}}{2}\right),\label{expphi11}
\end{eqnarray}
and for $j$ in $\sigma(\{N-k^\prime,\cdots,N-1\})$, there exists $\epsilon_j$ in $\{-1,1\}$ such as:
\begin{eqnarray}
\delta\phi^*_j=\epsilon_j\arccos\left(\rho\frac{(\sqrt{2})^{N-k^\prime}+(\sqrt{2})^{k^\prime-N}}{2}\right).\label{expphi12}
\end{eqnarray}
(\ref{valdphi1},\ref{valdphi2}) implies $(\epsilon_j)_{j\in\{0,\cdots,N-1\}}$ takes only two values thus (\ref{expphi11}) rewrites for $j$ in $\sigma(\{0,\cdots,N-k^\prime-1\})$:
\begin{eqnarray}
\delta\phi^*_j=\epsilon_{\sigma(0)}\arccos\left(\rho\frac{(\sqrt{2})^{k^\prime}+(\sqrt{2})^{-k^\prime}}{2}\right),\label{expphi110}
\end{eqnarray} 
and (\ref{expphi12}) rewrites for $j$ in $\sigma(\{N-k^\prime,\cdots,N-1\})$,
\begin{eqnarray}
\delta\phi^*_j=\epsilon_{\sigma(N-1)}\arccos\left(\rho\frac{(\sqrt{2})^{N-k^\prime}+(\sqrt{2})^{k^\prime-N}}{2}\right),\label{expphi120}
\end{eqnarray} 
thus proving (\ref{expphi1100},\ref{expphi1200}).
 
\item As $\mathcal{A}_{i}$ is tangent to $\mathcal{A}_{i+1}$, Lemma $1.2$ implies $\rho$ can only take three values:
\begin{enumerate}
\item If $\rho=0$, 
 \begin{eqnarray*}
 \cos(\delta\phi_i)=\frac{\kappa_i+\frac{1}{\kappa_i}}{2}
 \end{eqnarray*}
As by convexity, the arithmetic mean is superior to the geometric mean:
\begin{eqnarray*}
\cos(\delta\phi_i)\geq \sqrt{\kappa_i\frac{1}{\kappa_i}}=1
\end{eqnarray*}
As $R_i\neq R_{i+1}$, $\kappa_i\neq 1$ and the inequality is strict:
\begin{eqnarray*}
\cos(\delta\phi_i)>1.
\end{eqnarray*}
$\delta\phi_i$ is an imaginary number as well as the planar angular span $\gamma$.
\item If $\rho=1-\frac{\varepsilon_{i,i+2}}{\tau_{i,i+1}\tau_{i+1,i+2}}$, substituting (\ref{val_rho_int}) into (\ref{expphi1100},\ref{expphi1200}) gives:
 \begin{eqnarray*}
\cos(\delta\phi_{j})&=&\varepsilon_{i,i+2}\frac{2}{\kappa_i+\frac{1}{\kappa_i}}\frac{2}{\kappa_{i+1}+\frac{1}{\kappa_{i+1}}}\frac{\kappa_j+\frac{1}{\kappa_j}}{2}
\end{eqnarray*}
finally simplifying into:
\begin{eqnarray}
\cos(\delta\phi_{j})&=&\varepsilon_{i,i+2}\frac{2(\kappa_j+\frac{1}{\kappa_j})}{(\kappa_i+\frac{1}{\kappa_i})(\kappa_{i+1}+\frac{1}{\kappa_{i+1}})}.\label{couloucoulou}
\end{eqnarray}
(\ref{couloucoulou}) can now be further simplified by enumerating the different case for $i$, $i+1$ and $j$:
\begin{itemize}
\item If neither $i$, nor $i+1$, nor $j$ belong to  $\sigma(\{N-k^\prime,\cdots,N-1\})$, (\ref{e_cas1}) and (\ref{e_cas2}) imply $\kappa_i$, $\kappa_{i+1}$ and $\kappa_j$ are equal and (\ref{couloucoulou}) rewrites:
\begin{eqnarray}
\delta\phi^*_j=\epsilon_{\sigma(0)}\arccos\left(\frac{2\varepsilon_{i,i+2}}{\kappa_{i}+\frac{1}{\kappa_{i}}}\right);\label{eqoqu1}
\end{eqnarray}
substituting (\ref{e_cas2}) gives:
\begin{eqnarray}
\delta\phi^*_j=\epsilon_{\sigma(0)}\arccos\left(\frac{2\varepsilon_{i,i+2}}{(\sqrt{2})^{-k^\prime}+(\sqrt{2})^{k^\prime}}\right).\label{not_i_ip_j}
\end{eqnarray}
\item If neither $i$, nor $i+1$ belong to $\sigma(\{N-k^\prime,\cdots,N-1\})$ but only $j$, substituting (\ref{e_cas1}) and (\ref{e_cas2}) into (\ref{couloucoulou}) gives:
\begin{eqnarray}
\delta\phi^*_j=\epsilon_{\sigma(-1)}\arccos\left(\frac{2\varepsilon_{i,i+2}((\sqrt{2})^{N-k^\prime}+(\sqrt{2})^{k^\prime-N})}{((\sqrt{2})^{k^\prime}+(\sqrt{2})^{-k^\prime})^2}\right).\label{not_i_ip_but_j}
\end{eqnarray}
\item $\sigma^{-1}(i+1)=\sigma^{-1}(i)+k^\prime$ thus $i$ and $i+1$ can't simultaneously belong to $\sigma(\{N-k^\prime,\cdots,N-1\})$.\\
\item If $i$ and $j$ belong to $\sigma(\{N-k^\prime,\cdots,N-1\})$ but not $i+1$, (\ref{e_cas1}) and (\ref{e_cas2}) imply $\kappa_i$ and $\kappa_j$ are equal, finally (\ref{couloucoulou}) simplifies into:
\begin{eqnarray*}
\delta\phi^*_j=\epsilon_{\sigma(-1)}\arccos\left(\frac{2\varepsilon_{i,i+2}}{\kappa_{i+1}+\frac{1}{\kappa_{i+1}}}\right);
\end{eqnarray*}
substituting (\ref{e_cas2}) gives:
\begin{eqnarray}
\delta\phi^*_j=\epsilon_{\sigma(-1)}\arccos\left(\frac{2\varepsilon_{i,i+2}}{(\sqrt{2})^{-k^\prime}+(\sqrt{2})^{k^\prime}}\right).\label{i_notip_j}
\end{eqnarray}
A similar result would be obtain if $i+1$ and not $i$ belongs to $\sigma(\{N-k^\prime,\cdots,N-1\})$.
\item If $i$ belong to $\sigma(\{N-k^\prime,\cdots,N-1\})$ but neither $i+1$ neither $j$, (\ref{e_cas1}) and (\ref{e_cas2}) imply $\kappa_{i+1}$ and $\kappa_j$ are equal, finally (\ref{couloucoulou}) simplifies into:
\begin{eqnarray*}
\delta\phi^*_j=\epsilon_{\sigma(0)}\arccos\left(\frac{2\varepsilon_{i,i+2}}{\kappa_{i}+\frac{1}{\kappa_{i}}}\right);
\end{eqnarray*}
substituting (\ref{e_cas1}) gives:
\begin{eqnarray}
\delta\phi^*_j=\epsilon_{\sigma(0)}\arccos\left(\frac{2\varepsilon_{i,i+2}}{(\sqrt{2})^{N-k^\prime}+(\sqrt{2})^{k^\prime-N}}\right).\label{i_notip_notj}
\end{eqnarray}
A similar result would be obtain if $i+1$ and not $i$ belongs to $\sigma(\{N-k^\prime,\cdots,N-1\})$.
\end{itemize}
Lemma $1.3$ tells there exists a positive integer $m$ such as:
\begin{eqnarray*}
\gamma=\sum_{j\in\{0,\cdots,N-1\}}\big(\pi(1-\epsilon_j)+\delta\phi^*_j\big)+2m\pi.
\end{eqnarray*}
The sum can be divided in two subsums:
\begin{eqnarray*}
\sum_{j\in\{0,\cdots,N-k^\prime-1\}}\big(\pi(1-\epsilon_j)+\delta\phi^*_j\big)+\sum_{j\in\{N-k^\prime,\cdots,N-1\}}\big(\pi(1-\epsilon_j)+\delta\phi^*_j\big).
\end{eqnarray*}
\begin{eqnarray}
\label{sum_d}
\end{eqnarray}
(\ref{valdphi1},\ref{valdphi2}) imply (\ref{sum_d}) further simplifies, leading to:
\begin{eqnarray*}
\gamma=(N-k^\prime)\big(\pi(1-\epsilon_{\sigma(0)})+\delta\phi^*_{\sigma(0)}\big)+k^\prime\big(\pi(1-\epsilon_{\sigma(-1)})+\delta\phi^*_{\sigma(-1)}\big)+2m\pi.
\end{eqnarray*}
\begin{eqnarray}
\label{sum_s}
\end{eqnarray}
Substituting with (\ref{not_i_ip_j},\ref{not_i_ip_but_j}), in case neither $i$, nor $i+1$ belong to $\sigma(\{N-k^\prime,\cdots,N-1\})$ yields:
\begin{eqnarray*}
\gamma=&&(N-k^\prime)\left(\pi(1-\epsilon_{\sigma(0)})+\epsilon_{\sigma(0)}\arccos\left(\frac{2\varepsilon_{i,i+2}}{(\sqrt{2})^{-k^\prime}+(\sqrt{2})^{k^\prime}}\right)\right)\\
&+&k^\prime\left(\pi(1-\epsilon_{\sigma(-1)})+\epsilon_{\sigma(-1)}\arccos\left(\frac{2\varepsilon_{i,i+2}((\sqrt{2})^{N-k^\prime}+(\sqrt{2})^{k^\prime-N})}{((\sqrt{2})^{k^\prime}+(\sqrt{2})^{-k^\prime})^2}\right)\right)+2m\pi.
\end{eqnarray*}
\begin{eqnarray}
\label{Final1}
\end{eqnarray}
Substituting with (\ref{i_notip_j},\ref{i_notip_notj}), in case either $i$, or $i+1$ belong to $\sigma(\{N-k^\prime,\cdots,N-1\})$ yields:
\begin{eqnarray*}
\gamma=&&(N-k^\prime) \left(\pi(1-\epsilon_{\sigma(0)})+\epsilon_{\sigma(0)}\arccos\left(\frac{2\varepsilon_{i,i+2}}{(\sqrt{2})^{N-k^\prime}+(\sqrt{2})^{k^\prime-N}}\right)\right)\\
&+&k^\prime\left(\pi(1-\epsilon_{\sigma(-1)})+\epsilon_{\sigma(-1)}\arccos\left(\frac{2\varepsilon_{i,i+2}}{(\sqrt{2})^{-k^\prime}+(\sqrt{2})^{k^\prime}}\right)\right)+2m\pi.
\end{eqnarray*}
\begin{eqnarray}
\label{Final2}
\end{eqnarray}
\item If $\rho=1$, Lemma 1.3 tells there exists a positive integer $m$ such as substituting with (\ref{rootdphi01},\ref{rootdphi02}) in (\ref{expphi1100},\ref{expphi1200}) gives:
\begin{eqnarray*}
\gamma=(N-k^\prime)\left(\pi(1-\epsilon_{\sigma(0)})+\epsilon_{\sigma(0)}\frac{\pi}{2}\right)+k^\prime\left(\pi(1-\epsilon_{\sigma(-1)})+\epsilon_{\sigma(-1)} \frac{\pi}{2}\right)+2m\pi.
\end{eqnarray*}
\end{enumerate}
\end{enumerate}
\end{proof} 
\section{Conclusion}
In this article, self-similar shapes for the Errera division rule have been exhaustively constructed on the cone; each self-similar shape belongs to a one-parameter family determined by a permutation verifying an algebraic equation and for each integer $n$, there exists $\phi(n)$ such shapes, $\phi$ being the Euler totient function. In each family, the parameter for which two sides of the self-similar shape become tangent is calculated: it corresponds to a change of the number of self-intersection of the contour. Plant cell wall does not self-intersect thus the only contours which are biologically meaningful are the contour without self-intersection; the parameter of tangency thus give a rigorous and analytical estimate of the limit tissue curvature for which a given self-similar cell can be observed. The results could easily be generalized to self-similar asymmetric divisions minimizing the added perimeter (\textit{i.e} the ratio between the area of the daughter cell area and the mother cell area is $\lambda$ which can be different from $2$) by systematically changing the ratio $2$ in the formula by $\lambda$. 
\\\\
$\mathbf{Acknowledgement}$\\
The author thanks Jacques Dumais for providing this research topic.

\section{Annex}

\begin{lemma}\label{lm:casbase2}
Let $N>0$ an integer, let $\mathcal{P}$ a $N$-sided CCAOC on a cone of planar angular span $\gamma$ constituted by the $N$ CCAs, $(\mathcal{A}_{j})_{j\in\bZ / N \bZ}$, whose parameters are $(R_j,D_j,\phi_j)_{j\in \bZ / N \bZ}$.\\
Let suppose there exists $\rho>0$ such as:
\begin{eqnarray}
\forall j\in\bZ / N \bZ, \frac{R_j}{D_j}=\sqrt{\rho}.\label{Hypo1}
\end{eqnarray}
\begin{enumerate}
\item For any $j$ in $\bZ / N \bZ$,
\begin{eqnarray}
\delta\phi^*_j=\epsilon_j\arccos\left((1-\rho)\tau_{j,j+1}\right)\label{equ}
\end{eqnarray}
with:
\begin{eqnarray}
\tau_{j,j+1}=\frac{R_j^2+R_{j+1}^2}{2R_jR_{j+1}}.\label{tau_tan}
\end{eqnarray}
and $$(\epsilon_j)_{j\in\bZ/N\bZ}\in\{-1,1\}^N.$$
\item Let suppose for some $i\in\bZ / N \bZ$, $\mathcal{A}_{i}$ is tangent to $\mathcal{A}_{i+2}$; it implies $\rho$ belongs to 
$$\{0,1-\frac{\varepsilon_{i,i+2}}{\tau_{i,i+1}\tau_{i+1,i+2}},1\}$$
 with $\varepsilon_{i,i+2}=1$ in case of external tangency or $\varepsilon_{i,i+2}=-1$ in case of internal tangency. Each value of $\rho$ corresponds to determined values of $(\delta\phi_{i},\delta\phi_{i+1})$:
\begin{itemize}
\item To 
\begin{eqnarray}
\rho=0 \label{val_rho_0}
\end{eqnarray}
 corresponds: 
\begin{eqnarray}
\delta\phi^*_i&=&\epsilon_i\arccos\left(\tau_{i,i+1}\right)\label{rootrho01}\\
\delta\phi^*_{i+1}&=&\epsilon_{i+1}\arccos\left(\tau_{i+1,i+2}\right)\label{rootrho02}
\end{eqnarray}
with $\epsilon_{i+1}=\epsilon_i$. \\
\item To 
\begin{eqnarray}
\rho=1-\frac{\varepsilon_{i,i+2}}{\tau_{i,i+1}\tau_{i+1,i+2}} \label{val_rho_int}
\end{eqnarray}
 corresponds two real values:
\begin{eqnarray}
\delta\phi^*_{i}&=&\epsilon_i\arccos\Big(\frac{\varepsilon_{i,i+2}}{\tau_{i+1,i+2}}\Big)\label{rootdphi1}\\
\delta\phi^*_{i+1}&=&\epsilon_{i+1} \arccos\Big(\frac{\varepsilon_{i,i+2}}{\tau_{i,i+1}}\Big)\label{rootdphi2}
\end{eqnarray}
with:
\begin{eqnarray}
\epsilon_i=sign\left(\varepsilon_{i,i+2}\left(R_{i+1}^2-R_i^2\right)\left(R_{i+1}^2-R_{i+2}^2\right)\right)\epsilon_{i+1}.
\end{eqnarray}
\item To 
\begin{eqnarray}
\rho=1\label{val_rho_1}
\end{eqnarray}
corresponds: 
\begin{eqnarray}
\delta\phi^*_{i}&=&\epsilon_i\frac{\pi}{2}\label{rootdphi01}\\
\delta\phi^*_{i+1}&=&\epsilon_{i+1}\frac{\pi}{2}\label{rootdphi02}
\end{eqnarray}
with $\epsilon_{i+1}\varepsilon_{i,i+2}=\epsilon_i$.
\end{itemize}
\item There exists a positive integer $m$ such as the planar angular span $\gamma$ reads:
\begin{eqnarray}
\gamma=\sum_{j\in\bZ/N\bZ}\left((1-\epsilon_j)\pi+\delta\phi^*_j\right)+2m\pi.\label{Angular_defect}
\end{eqnarray}
\end{enumerate}
\end{lemma}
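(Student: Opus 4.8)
The plan is to work in the unrolled plane, where each CCL $\mathcal{C}_j$ is a genuine circle of radius $R_j$ whose center $C_j$ lies at distance $D_j$ from the apex $O$ at polar angle $\phi_j$, so that hypothesis (\ref{Hypo1}) reads $R_j^2=\rho D_j^2$ for every $j$. For the first assertion I would translate the orthogonality (\ref{orth90}) of two consecutive arcs into the classical orthogonal-circles relation $|C_j-C_{j+1}|^2=R_j^2+R_{j+1}^2$. Expanding the left-hand side by the law of cosines gives $|C_j-C_{j+1}|^2=D_j^2+D_{j+1}^2-2D_jD_{j+1}\cos\delta\phi_j$; substituting $R_\bullet^2=\rho D_\bullet^2$ lets $\rho$ factor out and leaves $\cos\delta\phi_j=(1-\rho)(D_j^2+D_{j+1}^2)/(2D_jD_{j+1})$. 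Because the ratio $R/D$ is common to all indices, $(D_j^2+D_{j+1}^2)/(2D_jD_{j+1})$ equals $\tau_{j,j+1}$ as defined in (\ref{tau_tan}); hence $\cos\delta\phi_j=(1-\rho)\tau_{j,j+1}$, which is (\ref{equ}) once the sign ambiguity of the inverse cosine is absorbed into $\epsilon_j\in\{-1,1\}$.

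For the second assertion, tangency of $\mathcal{A}_i$ and $\mathcal{A}_{i+2}$ is the tangency of the circles $\mathcal{C}_i$ and $\mathcal{C}_{i+2}$, namely $|C_i-C_{i+2}|^2=R_i^2+R_{i+2}^2+2\varepsilon_{i,i+2}R_iR_{i+2}$ with $\varepsilon_{i,i+2}=+1$ for external and $-1$ for internal tangency. Writing the left side by the law of cosines at the angle $\delta\phi_i+\delta\phi_{i+1}$ and again substituting $R^2=\rho D^2$ yields one relation for $\cos(\delta\phi_i+\delta\phi_{i+1})$; expanding it by the addition formula and eliminating the two cosines through the first assertion produces an expression for $\sin\delta\phi_i\sin\delta\phi_{i+1}$. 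Setting $v=1-\rho$ and squaring (with $\sin^2\delta\phi=1-\cos^2\delta\phi$) gives a quartic in $v$. The crucial input is the identity $\tau_{i,i+1}^2+\tau_{i+1,i+2}^2+\tau_{i,i+2}^2-2\tau_{i,i+1}\tau_{i+1,i+2}\tau_{i,i+2}=1$, valid because each $\tau_{j,k}=(R_j/R_k+R_k/R_j)/2$ is a hyperbolic cosine and the ratio attached to $\tau_{i,i+2}$ is the product of those attached to $\tau_{i,i+1}$ and $\tau_{i+1,i+2}$ (the $\cosh$ addition law). This collapses the quartic: the degree-four and constant terms cancel, a factor $v$ comes out (the root $v=0$, i.e. $\rho=1$), and Vieta applied to the residual quadratic returns the two remaining roots $v=1$ (i.e. $\rho=0$) and $v=\varepsilon_{i,i+2}/(\tau_{i,i+1}\tau_{i+1,i+2})$ (i.e. $\rho=1-\varepsilon_{i,i+2}/(\tau_{i,i+1}\tau_{i+1,i+2})$), which are exactly the three announced values. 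For each root I would read off $\delta\phi^*_i,\delta\phi^*_{i+1}$ from the first assertion and then fix the relative sign $\epsilon_i\epsilon_{i+1}$ by returning to the \emph{unsquared} relation: comparing $\mathrm{sign}(\sin\delta\phi_i\sin\delta\phi_{i+1})=\epsilon_i\epsilon_{i+1}$ with the sign of its right-hand side evaluated at the root. Since $\tau_{i,i+1}\tau_{i+1,i+2}-\tau_{i,i+2}$ has the sign of $(R_{i+1}^2-R_i^2)(R_{i+1}^2-R_{i+2}^2)$, this gives the stated sign relation at the intermediate root, $\epsilon_i\epsilon_{i+1}=\varepsilon_{i,i+2}$ at $\rho=1$, and (since $\cos\delta\phi_i=\tau_{i,i+1}>1$ makes $\delta\phi_i$ imaginary when the radii differ) the purely formal status of $\rho=0$.

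For the third assertion I would use the lifted contour $\mathcal{P}^*$. By its construction (\ref{A_plan}) the center of $\mathcal{C}^*_{j+N}$ is the center of $\mathcal{C}^*_j$ rotated by $\gamma$ about $O$, so the lifted center angles satisfy $\phi_{j+N}=\phi_j+\gamma$ and their true increments telescope to $\sum_{j=0}^{N-1}(\phi_{j+1}-\phi_j)=\gamma$. I would then note that, per arc, $(1-\epsilon_j)\pi+\delta\phi^*_j$ is precisely the representative of $\delta\phi_j$ in $[0,2\pi)$: it equals $\delta\phi^*_j\in[0,\pi]$ when $\epsilon_j=1$ and $\delta\phi^*_j+2\pi\in[\pi,2\pi]$ when $\epsilon_j=-1$, so it differs from the true increment $\phi_{j+1}-\phi_j$ by an integer multiple of $2\pi$. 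Summing over $j\in\bZ/N\bZ$ then yields $\sum_j((1-\epsilon_j)\pi+\delta\phi^*_j)=\gamma+2m\pi$ for an integer $m$, which is (\ref{Angular_defect}); the bounded range $[0,2N\pi)$ of the summands forces $m$ to be a positive integer.

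The main obstacle is the second assertion: spotting the $\cosh$ addition identity that factors the quartic into exactly the three announced roots, and then correctly discarding the spurious sign branch created by squaring so as to obtain the precise sign relations between $\epsilon_i$ and $\epsilon_{i+1}$ in each case.
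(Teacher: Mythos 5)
Your treatment of the first two assertions is correct and follows essentially the paper's route: the orthogonality and tangency conditions become law-of-cosines relations between the lifted centers, the common ratio $R/D$ factors out, and the squared addition formula yields a cubic in $X=1-\rho$ whose three roots are the announced values of $\rho$. The one genuine difference is how you exhibit $X=1$ as a root of the residual quadratic: you invoke the $\cosh$-addition identity $\tau_{i,i+1}^2+\tau_{i+1,i+2}^2+\tau_{i,i+2}^2-2\,\tau_{i,i+1}\tau_{i+1,i+2}\tau_{i,i+2}=1$, whereas the paper substitutes $X=1$ back into the unsquared equation and verifies it directly via the (imaginary) sines $\sin(\delta\phi_j)=\epsilon_j\mathbf{i}(\kappa_j-\kappa_j^{-1})/2$. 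Your identity is the cleaner way to see it, though note that the cancellation of the degree-four and constant terms happens automatically upon squaring and does not need the identity. Your sign bookkeeping through the unsquared relation reproduces the paper's three sign rules (you leave the $\rho=0$ case implicit, but the same comparison does give $\epsilon_i=\epsilon_{i+1}$ there).

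The gap is in the third assertion. Your telescoping argument on the lifted center angles only proves the formula modulo $2\pi$: since $\phi^*_{j+1}-\phi^*_j$ is congruent to $(1-\epsilon_j)\pi+\delta\phi^*_j$ mod $2\pi$ and the lift forces $\phi^*_{N}=\phi^*_0+\gamma$, you obtain $\gamma\equiv\sum_j\left((1-\epsilon_j)\pi+\delta\phi^*_j\right)\pmod{2\pi}$, i.e. the identity with $m$ an \emph{arbitrary} integer. The claim that the summands lying in $[0,2N\pi)$ ``forces $m$ to be a positive integer'' is a non sequitur: nothing in your argument prevents the sum from exceeding $\gamma$, which would make $m$ negative in the lemma's normalization. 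Note also that your argument never uses the arcs themselves, only their centers, so it cannot detect how many times an arc winds around its center. The paper obtains the sign of $m$ from an exact computation rather than a congruence: it applies the Gauss--Bonnet formula $\gamma=\sum_j\left(\int_{\mathcal{A}_j}o_j\,ds/R_j+\pi/2\right)$, evaluates each turning integral as $\left(\overrightarrow{C^{*}_jP^{*}_j},\overrightarrow{C^{*}_jP^{*}_{j+1}}\right)+2k_j\pi$ with $k_j\geq 0$ the number of complete turns of the arc around its center (the sign of $o_j$ being fixed by whether $\rho\leq 1$ or $\rho>1$), and converts the chord angles into $\delta\phi^*_j$ through the angle sum of the quadrangle $C^{*}_jP^{*}_{j+1}C^{*}_{j+1}O$; the integer $m=\sum_j k_j$ is then non-negative by construction. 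To close your proof you would need this, or an equivalent exact accounting of the winding of each arc.
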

\begin{proof}
\begin{enumerate}
\item Let consider $\mathcal{P}^{*}$. For any $j$, the condition of orthogonality between $\mathcal{A}^*_{j}$ and $\mathcal{A}^*_{j+1}$ reads:
\begin{eqnarray*}
(D_{j+1}\cos(\delta\phi_j)-D_j)^2+(D_{j+1}\sin(\delta\phi_j))^2&=&R_{j+1}^2+R_{j}^2.
\end{eqnarray*}
Once expanded, it reads:
\begin{eqnarray*}
D_{j+1}^2\cos(\delta\phi_j)^2-2D_jD_{j+1}\cos(\delta\phi_j)+D_j^2+D_{j+1}^2\sin(\delta\phi_j)^2&=&R_{j+1}^2+R_{j}^2.
\end{eqnarray*}
Gathering the squared trigonometric terms, the expression simplifies:
\begin{eqnarray*}
D_{j+1}^2+D_j^2-2D_jD_{j+1}\cos(\delta\phi_j)&=&R_{j+1}^2+R_{j}^2.
\end{eqnarray*}
Factorizing on the left by $D_j^2$ and on the right by $R_j^2$ gives:
\begin{eqnarray*}
D_j^2\Big(\Big(\frac{D_{j+1}}{D_j}\Big)^2+1-2\Big(\frac{D_{j+1}}{D_j}\Big)\cos(\delta\phi_j)\Big)&=&R_j^2\Big(1+\Big(\frac{R_{j+1}}{R_j}\Big)^2\Big)
\end{eqnarray*}
which can be rewritten:
\begin{eqnarray*}
D_j^2\Big(\Big(\frac{R_{j+1}}{R_j}\Big)^2+1-2\Big(\frac{R_{j+1}}{R_j}\Big)\cos(\delta\phi_j)\Big)&=&R_j^2\Big(1+\Big(\frac{R_{j+1}}{R_j}\Big)^2\Big)
\end{eqnarray*}
as (\ref{Hypo1}) implies: $\frac{D_{j+1}}{D_j}=\frac{R_{j+1}}{R_j}$.\\
Dividing both sides by $D_j^2\Big(1+\Big(\frac{R_{j+1}}{R_j}\Big)^2\Big)$ gives:
\begin{eqnarray*}
1-2\frac{R_{j+1}}{R_j}\frac{\cos(\delta\phi_j)}{\Big(1+\Big(\frac{R_{j+1}}{R_j}\Big)^2\Big)}&=&\frac{R_j^2}{D_j^2}
\end{eqnarray*}
which can be rewritten:
\begin{eqnarray*}
1-2\frac{R_jR_{j+1}}{R_j^2+R_{j+1}^2}\cos(\delta\phi_j)&=&\rho
\end{eqnarray*}
using notation (\ref{Hypo1}).
Finally, $\cos(\delta\phi_j)$ reads:
\begin{eqnarray}
\cos(\delta\phi_j)&=&(1-\rho)\frac{R_j^2+R_{j+1}^2}{2R_jR_{j+1}}\label{equm1}
\end{eqnarray}
and
\begin{eqnarray*}
\cos(\delta\phi_j)&=&(1-\rho)\tau_{j,j+1}
\end{eqnarray*}
which is equivalent to (\ref{equ}).
\item The tangency between $\mathcal{A}^*_{i}$ and $\mathcal{A}^*_{i+2}$ implies:
\begin{eqnarray*}
(D_{i+2}\cos(\delta\phi_i+\delta\phi_{i+1}))-D_i)^2+(D_{i+2}\sin(\delta\phi_i+\delta\phi_{i+1}))^2=(R_i+\varepsilon_{i,i+2}R_{i+2})^2
\end{eqnarray*}
\begin{eqnarray}
\label{tangency}
\end{eqnarray}
(\ref{tangency}) expands into:
\begin{eqnarray*}
D_{i+2}^2\cos(\delta\phi_i+\delta\phi_{i+1})^2&+&D_i^2-2D_iD_{i+2}\cos(\delta\phi_i+\delta\phi_{i+1}))+D_{i+2}^2\sin(\delta\phi_i+\delta\phi_{i+1})^2\\
&=&(R_i+\varepsilon_{i,i+2}R_{i+2})^2.
\end{eqnarray*}
Gathering the squared trigonometric terms simplifies the expression:
\begin{eqnarray*}
D_{i+2}^2+D_i^2-2D_{i+2}D_i\cos(\delta\phi_i+\delta\phi_{i+1})&=&(R_{i}+\varepsilon_{i,i+2}R_{i+2})^2.
\end{eqnarray*}
Substituting $\Big(\frac{D_{i}}{R_{i}}\Big)R_{i}$ to $D_{i}$ and $\Big(\frac{D_{i+2}}{R_{i+2}}\Big)R_{i+2}$ to $D_{i+2}$ gives:
\begin{eqnarray*}
\Big(\Big(\frac{D_{i+2}}{R_{i+2}}\Big)^2R_{i+2}^2&+&\Big(\frac{D_i}{R_i}\Big)^2R_i^2-2\Big(\frac{D_{i+2}}{R_{i+2}}\Big)R_{i+2}\Big(\frac{D_i}{R_i}\Big)R_i\cos(\delta\phi_i+\delta\phi_{i+1})\Big)
=(R_{i}+\varepsilon_{i,i+2}R_{i+2})^2
\end{eqnarray*}
\begin{eqnarray}
\label{etape}
\end{eqnarray}
(\ref{Hypo1}) implies $\frac{D_i}{R_i}=\frac{D_{i+2}}{R_{i+2}}$, thus (\ref{etape}) can be rewritten:
\begin{eqnarray*}
\Big(\Big(\frac{D_{i}}{R_{i}}\Big)^2R_{i+2}^2+\Big(\frac{D_i}{R_i}\Big)^2R_i^2-2\Big(\frac{D_i}{R_i}\Big)^2R_{i+2}R_i\cos(\delta\phi_i+\delta\phi_{i+1})\Big)&=&(R_{i}+\varepsilon_{i,i+2}R_{i+2})^2.
\end{eqnarray*}
The whole expression can factorized by $\Big(\frac{D_i}{R_i}\Big)^2$ which gives:
\begin{eqnarray*}
\Big(\frac{D_{i}}{R_{i}}\Big)^2(R_{i+2}^2+R_i^2-2R_iR_{i+2}\cos(\delta\phi_i+\delta\phi_{i+1}))&=&(R_{i}+\varepsilon_{i,i+2}R_{i+2})^2.
\end{eqnarray*}
As $R_{i+2}^2+R_i^2$ is equal to $(R_{i+2}+\varepsilon_{i,i+2} R_i)^2-2\varepsilon_{i,i+2}R_{i}R_{i+2}$, the expression can be rewritten as:
\begin{eqnarray*}
\Big(\frac{D_i}{R_i}\Big)^2((R_{i+2}+\varepsilon_{i,i+2} R_i)^2-2\varepsilon_{i,i+2}R_{i}R_{i+2}-2R_{i+2}R_i\cos(\delta\phi_i+\delta\phi_{i+1}))&=&(R_{i}+\varepsilon_{i,i+2}R_{i+2})^2.
\end{eqnarray*}
Dividing both sides by $\Big(\frac{D_i}{R_i}\Big)^2(R_{i}+\varepsilon_{i,i+2}R_{i+2})^2$ gives:
\begin{eqnarray*}
1-\frac{2R_{i+2}R_i}{(R_{i}+\varepsilon_{i,i+2}R_{i+2})^2}(\cos(\delta\phi_i+\delta\phi_{i+1})+\varepsilon_{i,i+2}))&=&\Big(\frac{R_i}{D_i}\Big)^2.
\end{eqnarray*}
Finally $\cos(\delta\phi_i+\delta\phi_{i+1})$ can be expressed as:
\begin{eqnarray*}
\cos(\delta\phi_i+\delta\phi_{i+1})&=&(1-\rho)\frac{(R_i+\varepsilon_{i,i+2}R_{i+2})^2}{2R_{i}R_{i+2}}-\varepsilon_{i,i+2}
\end{eqnarray*}
which can be rewritten:
\begin{eqnarray}
\cos(\delta\phi_i+\delta\phi_{i+1})&=&(1-\rho)\tau_{tan,i,i+2}-\varepsilon_{i,i+2}\label{eau}
\end{eqnarray}
using the notation:
\begin{eqnarray}
\tau_{tan,i,i+2}=\frac{(R_i+\varepsilon_{i,i+2}R_{i+2})^2}{2R_{i}R_{i+2}}.\label{Formula_tau_tan}
\end{eqnarray}
The left side of (\ref{eau}) can be expanded:
\begin{eqnarray}
\cos(\delta\phi_i)\cos(\delta\phi_{i+1})-\sin(\delta\phi_i)\sin(\delta\phi_{i+1})&=&X\tau_{tan,i,i+2}-\varepsilon_{i,i+2}\label{eqo}
\end{eqnarray}
with $X=(1-\rho)$.
substituting using (\ref{equ}) gives:
\begin{eqnarray*}
X^2\tau_{i,i+1}\tau_{i+1,i+2}-\sin(\delta\phi_i)\sin(\delta\phi_{i+1})&=&X\tau_{tan,i,i+2}-\varepsilon_{i,i+2}
\end{eqnarray*}
which rearranges into:
\begin{eqnarray*}
X^2\tau_{i,i+1}\tau_{i+1,i+2}-X\tau_{tan,i,i+2}+\varepsilon_{i,i+2}&=&\sin(\delta\phi_i)\sin(\delta\phi_{i+1}).
\end{eqnarray*}
We raise to the square both sides:
\begin{eqnarray*}
(X^2\tau_{i,i+1}\tau_{i+1,i+2}-X\tau_{tan,i,i+2}+\varepsilon_{i,i+2})^2&=&(1-\cos^2(\delta\phi_i))(1-\cos^2(\delta\phi_{i+1})).
\end{eqnarray*}
Substituting using (\ref{equ}) gives:
\begin{eqnarray*}
(X^2\tau_{i,i+1}\tau_{i+1,i+2}-X\tau_{tan,i,i+2}+\varepsilon_{i,i+2})^2&=&(1-X^2\tau_{i,i+1}^2)(1-X^2\tau_{i+1,i+2}^2))
\end{eqnarray*}
and once the parenthesis expanded:
\begin{eqnarray*}
X^4\tau_{i,i+1}^2\tau_{i+1,i+2}^2&+&X^2\tau_{tan,i,i+2}^2+\varepsilon_{i,i+2}^2-2X^3\tau_{i,i+1}\tau_{i+1,i+2}\tau_{tan,i,i+2}+2X^2\tau_{i,i+1}\tau_{i+1,i+2}\varepsilon_{i,i+2}\\
&-&2X\tau_{tan,i,i+2}\varepsilon_{i,i+2}=1-X^2\tau_{i,i+1}^2-X^2\tau_{i+1,i+2}^2+X^4\tau_{i,i+1}^2\tau_{i+1,i+2}^2.
\end{eqnarray*}
As $\varepsilon_{i,i+2}=\pm1$, it rewrites:
\begin{eqnarray*}
X^4\tau_{i,i+1}^2\tau_{i+1,i+2}^2&+&X^2\tau_{tan,i,i+2}^2+1-2X^3\tau_{i,i+1}\tau_{i+1,i+2}\tau_{tan,i,i+2}+2X^2\tau_{i,i+1}\tau_{i+1,i+2}\varepsilon_{i,i+2}\\
&-&2X\tau_{tan,i,i+2}\varepsilon_{i,i+2}=1-X^2\tau_{i,i+1}^2-X^2\tau_{i+1,i+2}^2+X^4\tau_{i,i+1}^2\tau_{i+1,i+2}^2.
\end{eqnarray*}
The $0$ order and $4$ order monomials on both sides cancel each other:
\begin{eqnarray*}
X^2\tau_{tan,i,i+2}^2&+&2\varepsilon_{i,i+2}X^2\tau_{i,i+1}\tau_{i+1,i+2}-2\varepsilon_{i,i+2}X\tau_{tan,i,i+2}-2X^3\tau_{i,i+1}\tau_{i+1,i+2}\tau_{tan,i,i+2}\\
&=&-X^2\tau_{i,i+1}^2-X^2\tau_{i+1,i+2}^2.
\end{eqnarray*}
\begin{eqnarray}
\label{third_degree_polynomial}
\end{eqnarray}
$X$ is thus a root of a third degree polynomial and it takes at most three values; as both sides can be divided by $X$, $0$ is one of these roots.
substituting the root $X=0$ into (\ref{equ}) gives:
\begin{eqnarray*}
\delta\phi^*_{i}&=&\epsilon_i\arccos(0),\\
\delta\phi^*_{i+1}&=&\epsilon_{i+1}\arccos(0)
\end{eqnarray*}
thus providing the relations:
\begin{eqnarray}
\delta\phi^*_{i}&=&\epsilon_i\frac{\pi}{2},\label{oubouloulou1}\\
\delta\phi^*_{i+1}&=&\epsilon_{i+1}\frac{\pi}{2}.\label{oubouloulou2}
\end{eqnarray}
Substituting $X=0$ and (\ref{oubouloulou1},\ref{oubouloulou2}) into (\ref{eqo}) gives:
\begin{eqnarray*}
0-\epsilon_i\epsilon_{i+1}&=&0-\varepsilon_{i,i+2}.
\end{eqnarray*}
As $$ | \epsilon_i|=| \epsilon_{i+1}|=|\varepsilon_{i,i+2}|=1,$$
it is equivalent to
\begin{eqnarray*}
 \epsilon_{i+1}\varepsilon_{i,i+2}=\epsilon_i
\end{eqnarray*}
thus finishing to prove (\ref{rootdphi01},\ref{rootdphi02}).\\\\
Dividing by X the third degree polynomial (\ref{third_degree_polynomial}) gives:
\begin{eqnarray*}
X\tau_{tan,i,i+2}^2&+&2\varepsilon_{i,i+2}X\tau_{i,i+1}\tau_{i+1,i+2}-2\varepsilon_{i,i+2}\tau_{tan,i,i+2}-2X^2\tau_{i,i+1}\tau_{i+1,i+2}\tau_{tan,i,i+2}\\
&=&-X\tau_{i,i+1}^2-X\tau_{i+1,i+2}^2
\end{eqnarray*}
whose terms can be regrouped into:
\begin{eqnarray*}
-2\varepsilon_{i,i+2}\tau_{tan,i,i+2}&+&X(\tau_{i,i+1}^2+2\varepsilon_{i,i+2}\tau_{i,i+1}\tau_{i+1,i+2}+\tau_{i+1,i+2}^2
+\tau_{tan,i,i+2}^2)\\&-&2X^2\tau_{i,i+1}\tau_{i+1,i+2}\tau_{tan,i,i+2}=0.
\end{eqnarray*}
The two remaining roots of (\ref{third_degree_polynomial}) also cancel the second degree polynomial:
\begin{eqnarray*}
-2\varepsilon_{i,i+2}\tau_{tan,i,i+2}+X((\tau_{i,i+1}+\varepsilon_{i,i+2}\tau_{i+1,i+2})^2+\tau_{tan,i,i+2}^2)-2X^2\tau_{i,i+1}\tau_{i+1,i+2}\tau_{tan,i,i+2}&=&0.
\end{eqnarray*}
\begin{eqnarray}
\label{ogu}
\end{eqnarray}
Let suppose $X=1$. Substituting into (\ref{equ}) gives:
\begin{eqnarray*}
\delta\phi^*_i&=&\epsilon_i\arccos\left(\tau_{i,i+1}\right),\\
\delta\phi^*_{i+1}&=&\epsilon_{i+1}\arccos\left(\tau_{i+1,i+2}\right).
\end{eqnarray*}
Moreover substituting $X=1$ on the right side of (\ref{eqo}) gives:
\begin{eqnarray*}
\frac{(R_i+\varepsilon_{i,i+2}R_{i+2})^2}{2R_{i}R_{i+2}}-\varepsilon_{i,i+2}=\frac{R_i^2+2\varepsilon_{i,i+2}R_iR_{i+2}+\varepsilon_{i,i+2}^2R_{i+2}^2}{2R_{i}R_{i+2}}-\frac{\varepsilon_{i,i+2}2R_{i}R_{i+2}}{2R_{i}R_{i+2}}
\end{eqnarray*}
which simplifies as $\varepsilon_{i,i+2}=\pm1$ into:
\begin{eqnarray}
\frac{(R_i+\varepsilon_{i,i+2}R_{i+2})^2}{2R_{i}R_{i+2}}-\varepsilon_{i,i+2}=\frac{R_i^2+R_{i+2}^2}{2R_{i}R_{i+2}}.\label{lefteqouu}
\end{eqnarray}
For a given $j$, $\tau_{j,j+1}$ can be rewritten:
\begin{eqnarray*}
\tau_{j,j+1}=\frac{\frac{R_j^2}{R_jR_{j+1}}+\frac{R_{j+1}^2}{R_jR_{j+1}}}{2}
\end{eqnarray*}
which simplifies into:
\begin{eqnarray*}
\tau_{j,j+1}=\frac{\frac{R_j}{R_{j+1}}+\frac{R_{j+1}}{R_j}}{2}
\end{eqnarray*}
and finally rewrites
\begin{eqnarray}
\tau_{j,j+1}=\frac{\kappa_j+\frac{1}{\kappa_j}}{2}\label{express}
\end{eqnarray}
with the notation:
\begin{eqnarray}
\kappa_j=\frac{R_j}{R_{j+1}}.\label{okolo}
\end{eqnarray}
(\ref{equ}) implies:
\begin{eqnarray*}
\sin(\delta\phi_j)&=&\epsilon_j\sqrt{1-\cos^2(\delta\phi_j)}.
\end{eqnarray*}
Substituting with (\ref{express}) gives:
\begin{eqnarray}
\sin(\delta\phi_j)&=&\epsilon_j\sqrt{1-\Big(\frac{\kappa_j+\frac{1}{\kappa_j}}{2}\Big)^2}.\label{rooootoo}
\end{eqnarray}
The expression inside the root symbol can be expanded:
\begin{eqnarray*}
1-\Big(\frac{\kappa_j^2+\frac{1}{\kappa_j^2}+2\frac{\kappa_j}{\kappa_j}}{4}\Big)
\end{eqnarray*}
which can be rearranged as:
\begin{eqnarray*}
\frac{4-\kappa_j^2-\frac{1}{\kappa_j^2}-2}{4}
\end{eqnarray*}
simplifying into:
\begin{eqnarray*}
\frac{2-\kappa_j^2-\frac{1}{\kappa_j^2}}{4}
\end{eqnarray*}
which factorizes in a squared term:
\begin{eqnarray*}
\frac{-(\kappa_j-\frac{1}{\kappa_j})^2}{4}.
\end{eqnarray*}
Substituting inside the root symbol of (\ref{rooootoo}) gives:
\begin{eqnarray}
\sin(\delta\phi_j)&=& \epsilon_j\textbf{i} (\frac{\kappa_j-\frac{1}{\kappa_j}}{2}).\label{sinform}
\end{eqnarray}
Substituting (\ref{express},\ref{sinform}) in the left side of (\ref{eqo}) gives:
\begin{eqnarray*}
\cos(\delta\phi_i)\cos(\delta\phi_{i+1})-\sin(\delta\phi_i)\sin(\delta\phi_{i+1})&=&\frac{\kappa_i+\frac{1}{\kappa_i}}{2}\frac{\kappa_{i+1}+\frac{1}{\kappa_{i+1}}}{2}- \epsilon_i\epsilon_{i+1}\textbf{i}^2\frac{\kappa_i-\frac{1}{\kappa_i}}{2}\frac{\kappa_{i+1}-\frac{1}{\kappa_{i+1}}}{2}
\end{eqnarray*}
which simplifies into:
\begin{eqnarray*}
\cos(\delta\phi_i)\cos(\delta\phi_{i+1})-\sin(\delta\phi_i)\sin(\delta\phi_{i+1})&=&\frac{(\kappa_i+\frac{1}{\kappa_i})(\kappa_{i+1}+\frac{1}{\kappa_{i+1}})+\epsilon_i\epsilon_{i+1}(\kappa_i-\frac{1}{\kappa_i})(\kappa_{i+1}-\frac{1}{\kappa_{i+1}})}{4}.
\end{eqnarray*}
\begin{eqnarray}
\label{initooooooo}
\end{eqnarray}
Expanding the right side of the expression gives:
\begin{eqnarray*}
&&(\kappa_i+\frac{1}{\kappa_i})(\kappa_{i+1}+\frac{1}{\kappa_{i+1}})+ \epsilon_i\epsilon_{i+1}(\kappa_i-\frac{1}{\kappa_i})(\kappa_{i+1}-\frac{1}{\kappa_{i+1}})=\\
&&\kappa_i\kappa_{i+1}+\frac{\kappa_i}{\kappa_{i+1}}+\frac{\kappa_{i+1}}{\kappa_{i}}+\frac{1}{\kappa_{i}\kappa_{i+1}}+ \epsilon_i\epsilon_{i+1}\left(\kappa_i\kappa_{i+1}-\frac{\kappa_i}{\kappa_{i+1}}-\frac{\kappa_{i+1}}{\kappa_{i}}+\frac{1}{\kappa_{i}\kappa_{i+1}}\right).
\end{eqnarray*}
\begin{eqnarray}
\label{Developppo}
\end{eqnarray}
If $\epsilon_i=\epsilon_{i+1}$, (\ref{Developppo}) further simplifies in:
\begin{eqnarray*}
(\kappa_i+\frac{1}{\kappa_i})(\kappa_{i+1}+\frac{1}{\kappa_{i+1}})+(\kappa_i-\frac{1}{\kappa_i})(\kappa_{i+1}-\frac{1}{\kappa_{i+1}})=2\kappa_i\kappa_{i+1}+2\frac{1}{\kappa_{i}\kappa_{i+1}},
\end{eqnarray*}
\begin{eqnarray}
\label{simplifiyuu}
\end{eqnarray}
thus (\ref{initooooooo}) simplifies:
\begin{eqnarray}
\cos(\delta\phi_i)\cos(\delta\phi_{i+1})-\sin(\delta\phi_i)\sin(\delta\phi_{i+1})&=&\frac{\kappa_{i+1}\kappa_i+\frac{1}{\kappa_i\kappa_{i+1}}}{2}\label{simplif1}
\end{eqnarray}
else if $\epsilon_i=-\epsilon_{i+1}$, (\ref{Developppo}) further simplifies in:
\begin{eqnarray}
\cos(\delta\phi_i)\cos(\delta\phi_{i+1})-\sin(\delta\phi_i)\sin(\delta\phi_{i+1})&=&\frac{\frac{\kappa_{i+1}}{\kappa_i}+\frac{\kappa_i}{\kappa_{i+1}}}{2}\label{simplif2}
\end{eqnarray}
If $\epsilon_i=\epsilon_{i+1}$, substituting in (\ref{simplif1}) with (\ref{okolo}) gives:
\begin{eqnarray*}
\cos(\delta\phi_i)\cos(\delta\phi_{i+1})-\sin(\delta\phi_i)\sin(\delta\phi_{i+1})&=&\frac{\frac{R_{i+2}}{R_{i+1}}\frac{R_{i+1}}{R_{i}}+\frac{R_{i}}{R_{i+1}}\frac{R_{i+1}}{R_{i+2}}}{2}.
\end{eqnarray*}
which simplifies into:
\begin{eqnarray*}
\cos(\delta\phi_i)\cos(\delta\phi_{i+1})-\sin(\delta\phi_i)\sin(\delta\phi_{i+1})&=&\frac{\frac{R_{i+2}}{R_{i}}+\frac{R_{i}}{R_{i+2}}}{2}
\end{eqnarray*}
and can be rearranged as:
\begin{eqnarray}
\cos(\delta\phi_i)\cos(\delta\phi_{i+1})-\sin(\delta\phi_i)\sin(\delta\phi_{i+1})&=&\frac{R_{i+2}^2+R_{i}^2}{2R_{i+2}R_i}.\label{glopmmmm}
\end{eqnarray}
(\ref{glopmmmm}) equals (\ref{lefteqouu}): $X=1$ (i.e $\rho=0$) is a root of (\ref{ogu}) which corresponds to the case $\epsilon_i=\epsilon_{i+1}$ thus proving (\ref{rootrho01},\ref{rootrho02}).\\
If $\epsilon_i=-\epsilon_{i+1}$, substituting in (\ref{simplif2}) with (\ref{okolo}) gives:
\begin{eqnarray*}
\cos(\delta\phi_i)\cos(\delta\phi_{i+1})-\sin(\delta\phi_i)\sin(\delta\phi_{i+1})&=&\frac{\frac{R_{i+2}}{R_{i+1}}\frac{R_{i}}{R_{i+1}}+\frac{R_{i+1}}{R_{i}}\frac{R_{i+1}}{R_{i+2}}}{2}.
\end{eqnarray*}
which simplifies into:
\begin{eqnarray*}
\cos(\delta\phi_i)\cos(\delta\phi_{i+1})-\sin(\delta\phi_i)\sin(\delta\phi_{i+1})&=&\frac{\frac{R_{i}R_{i+2}}{R_{i+1}^2}+\frac{R_{i+1}^2}{R_{i}R_{i+2}}}{2}
\end{eqnarray*}
which is in general different from (\ref{lefteqouu}) thus $\epsilon_i=-\epsilon_{i+1}$ does not corresponds to a root canceling the equation (\ref{ogu}): the solutions (\ref{rootrho01},\ref{rootrho02}) are the only one corresponding to $\rho=0$ (i.e $X=1$).\\
As the ratio between the lower order term and the higher order term of the polynomial (\ref{ogu}) gives the product of both roots:
$$\frac{-2\varepsilon_{i,i+2}\tau_{tan,i,i+2}}{-2\tau_{i,i+1}\tau_{i+1,i+2}\tau_{tan,i,i+2}},$$
the other root reads: 
\begin{eqnarray*}
X&=&\frac{-2\varepsilon_{i,i+2} \tau_{tan,i,i+2}}{-2\tau_{i,i+1}\tau_{i+1,i+2}\tau_{tan,i,i+2}}
\end{eqnarray*}
which simplifies into:
\begin{eqnarray}
X&=&\frac{\varepsilon_{i,i+2} }{\tau_{i,i+1}\tau_{i+1,i+2}}.\label{root}
\end{eqnarray}
Substituting this value of $X$ into (\ref{equ}) gives:
\begin{eqnarray}
\delta\phi^*_{i}&=&\epsilon_i\arccos\Big(\frac{\varepsilon_{i,i+2}}{\tau_{i+1,i+2}}\Big),\label{rootdphi1X}\\
\delta\phi^*_{i+1}&=&\epsilon_{i+1} \arccos\Big(\frac{\varepsilon_{i,i+2}}{\tau_{i,i+1}}\Big).\label{rootdphi2X}
\end{eqnarray}
Substituting (\ref{rootdphi1X},\ref{rootdphi2X}) in (\ref{eqo}) gives:
\begin{eqnarray*}
&&\Big(\frac{\varepsilon_{i,i+2}}{\tau_{i+1,i+2}}\Big)\Big(\frac{\varepsilon_{i,i+2}}{\tau_{i,i+1}}\Big)\\
&-&\epsilon_i\sqrt{1-\Big(\frac{\varepsilon_{i,i+2}}{\tau_{i+1,i+2}}\Big)^2}\epsilon_{i+1}\sqrt{1-\Big(\frac{\varepsilon_{i,i+2}}{\tau_{i,i+1}}\Big)^2}\\
&=&\frac{\varepsilon_{i,i+2} }{\tau_{i,i+1}\tau_{i+1,i+2}}\tau_{tan,i,i+2}-\varepsilon_{i,i+2}.
\end{eqnarray*}
Substituting $\varepsilon_{i,i+2}=\pm1$ on the left side and factorizing by $\varepsilon_{i,i+2}$ on the right side, it rewrites:
\begin{eqnarray*}
&&\frac{1}{\tau_{i,i+1}\tau_{i+1,i+2}}-\sqrt{1-\frac{1}{\tau_{i+1,i+2}^2}}\epsilon_i\epsilon_{i+1}\sqrt{1-\frac{1}{\tau_{i,i+1}^2}}\\
&=&\varepsilon_{i,i+2} \left(\frac{\tau_{tan,i,i+2}}{\tau_{i,i+1}\tau_{i+1,i+2}}-1\right).
\end{eqnarray*}
Multiplying both sides by $\tau_{i,i+1}\tau_{i+1,i+2}$ gives:
\begin{eqnarray*}
1-\epsilon_i\epsilon_{i+1}\sqrt{\tau_{i,i+1}^2-1}\sqrt{\tau_{i+1,i+2}^2-1}&=&\varepsilon_{i,i+2}(\tau_{tan,i,i+2}-\tau_{i,i+1}\tau_{i+1,i+2})
\end{eqnarray*}
which rewrite:
\begin{eqnarray*}
1&-&\epsilon_i\epsilon_{i+1}\sqrt{\left(\frac{\frac{R_i}{R_{i+1}}+\frac{R_{i+1}}{R_{i}}}{2}\right)^2-1}\sqrt{\left(\frac{\frac{R_{i+1}}{R_{i+2}}+\frac{R_{i+2}}{R_{i+1}}}{2}\right)^2-1}\\
&=&\varepsilon_{i,i+2}\left(\frac{(R_i+\varepsilon_{i,i+2}R_{i+2})^2}{2R_{i}R_{i+2}}-\left(\frac{\frac{R_i}{R_{i+1}}+\frac{R_{i+1}}{R_{i}}}{2}\right)\left(\frac{\frac{R_{i+1}}{R_{i+2}}+\frac{R_{i+2}}{R_{i+1}}}{2}\right)\right).
\end{eqnarray*}
\begin{eqnarray}
\label{gulgulu}
\end{eqnarray}
For a given $j$, the  binomial expansion gives:
\begin{eqnarray*}
\left(\frac{\frac{R_j}{R_{j+1}}+\frac{R_{j+1}}{R_{j}}}{2}\right)^2=\left(\frac{\frac{R_j}{R_{j+1}}}{2}\right)^2+\left(\frac{\frac{R_{j+1}}{R_{j}}}{2}\right)^2+2\left(\frac{\frac{R_j}{R_{j+1}}}{2}\right)\left(\frac{\frac{R_{j+1}}{R_{j}}}{2}\right)
\end{eqnarray*}
which simplifies into:
\begin{eqnarray*}
\left(\frac{\frac{R_j}{R_{j+1}}+\frac{R_{j+1}}{R_{j}}}{2}\right)^2=\left(\frac{\frac{R_j}{R_{j+1}}}{2}\right)^2+\left(\frac{\frac{R_{j+1}}{R_{j}}}{2}\right)^2+\frac{1}{2}.
\end{eqnarray*}
Substracting $1$ on both sides give:
\begin{eqnarray*}
\left(\frac{\frac{R_j}{R_{j+1}}+\frac{R_{j+1}}{R_{j}}}{2}\right)^2-1=\left(\frac{\frac{R_j}{R_{j+1}}}{2}\right)^2+\left(\frac{\frac{R_{j+1}}{R_{j}}}{2}\right)^2-\frac{1}{2}
\end{eqnarray*}
which can be rewritten:
\begin{eqnarray*}
\left(\frac{\frac{R_j}{R_{j+1}}+\frac{R_{j+1}}{R_{j}}}{2}\right)^2-1=\left(\frac{\frac{R_j}{R_{j+1}}}{2}-\frac{\frac{R_{j+1}}{R_{j}}}{2}\right)^2
\end{eqnarray*}
thus:
\begin{eqnarray*}
\sqrt{\left(\frac{\frac{R_j}{R_{j+1}}+\frac{R_{j+1}}{R_{j}}}{2}\right)^2-1}=\left|\frac{\frac{R_j}{R_{j+1}}-\frac{R_{j+1}}{R_{j}}}{2}\right|.
\end{eqnarray*}
which can be rewritten:
\begin{eqnarray}
\sqrt{\left(\frac{\frac{R_j}{R_{j+1}}+\frac{R_{j+1}}{R_{j}}}{2}\right)^2-1}=\left|\frac{R_j^2-R_{j+1}^2}{2R_{j}R_{j+1}}\right|.\label{rootuio}
\end{eqnarray}
Using the square root terms on the left side of (\ref{gulgulu}) rewrites :
\begin{eqnarray*}
&&\epsilon_i\epsilon_{i+1}\sqrt{\left(\frac{\frac{R_i}{R_{i+1}}+\frac{R_{i+1}}{R_{i}}}{2}\right)^2-1}\sqrt{\left(\frac{\frac{R_{i+1}}{R_{i+2}}+\frac{R_{i+2}}{R_{i+1}}}{2}\right)^2-1}\\&=&\epsilon_i\epsilon_{i+1}\left|\frac{R_i^2-R_{i+1}^2}{2R_{i}R_{i+1}}\right|\left|\frac{R_{i+1}^2-R_{i+2}^2}{2R_{i+1}R_{j+2}}\right|.
\end{eqnarray*}
The left side of (\ref{gulgulu}) simplifies into:
\begin{eqnarray*}
1&-&\epsilon_i\epsilon_{i+1}\sqrt{\left(\frac{\frac{R_i}{R_{i+1}}+\frac{R_{i+1}}{R_{i}}}{2}\right)^2-1}\sqrt{\left(\frac{\frac{R_{i+1}}{R_{i+2}}+\frac{R_{i+2}}{R_{i+1}}}{2}\right)^2-1}\\
&=&1-\epsilon_i\epsilon_{i+1}\left|\frac{R_i^2-R_{i+1}^2}{2R_{i}R_{i+1}}\right|\left|\frac{R_{i+1}^2-R_{i+2}^2}{2R_{i+1}R_{j+2}}\right|.
\end{eqnarray*}
\begin{eqnarray}
\label{leftsidesimp}
\end{eqnarray}
Putting the same denominator to the two fractions on the right side (\ref{gulgulu}) gives:
\begin{eqnarray*}
&\ &\frac{(R_i+\varepsilon_{i,i+2}R_{i+2})^2}{2R_{i}R_{i+2}}-\left(\frac{R_i^2+R_{i+1}^2}{2R_iR_{i+1}}\right)\left(\frac{R_{i+1}^2+R_{i+2}^2}{2R_{i+1}R_{i+2}}\right)\\
&=&\frac{2(R_i+\varepsilon_{i,i+2}R_{i+2})^2R_{i+1}^2-\left(R_i^2+R_{i+1}^2\right)\left(R_{i+1}^2+R_{i+2}^2\right)}{4R_iR_{i+1}^2R_{i+2}}.
\end{eqnarray*}
\begin{eqnarray}
\label{rightside1}
\end{eqnarray}
Expanding the numerator on the right side of (\ref{rightside1}) gives:
\begin{eqnarray*}
&&2(R_i+\varepsilon_{i,i+2}R_{i+2})^2R_{i+1}^2-\left(R_i^2+R_{i+1}^2\right)\left(R_{i+1}^2+R_{i+2}^2\right)\\
&=&2R_i^2R_{i+1}^2+2\varepsilon_{i,i+2}^2R_{i+1}^2R_{i+2}^2+4\varepsilon_{i,i+2}R_iR_{i+1}^2R_{i+2}-R_i^2R_{i+1}^2-R_{i+1}^4-R_{i+1}^2R_{i+2}^2
\end{eqnarray*}
substituting $\varepsilon_{i,i+2}=\pm1$ gives:
\begin{eqnarray*}
&&2(R_i+\varepsilon_{i,i+2}R_{i+2})^2R_{i+1}^2-\left(R_i^2+R_{i+1}^2\right)\left(R_{i+1}^2+R_{i+2}^2\right)\\
&=&2R_i^2R_{i+1}^2+2R_{i+1}^2R_{i+2}^2+4\varepsilon_{i,i+2}R_iR_{i+1}^2R_{i+2}-R_i^2R_{i+1}^2-R_{i+1}^4-R_{i+1}^2R_{i+2}^2.
\end{eqnarray*}
Terms can be regrouped:
\begin{eqnarray*}
&&2(R_i+\varepsilon_{i,i+2}R_{i+2})^2R_{i+1}^2-\left(R_i^2+R_{i+1}^2\right)\left(R_{i+1}^2+R_{i+2}^2\right)\\
&=&4\varepsilon_{i,i+2}R_iR_{i+1}^2R_{i+2}+R_i^2R_{i+1}^2-R_{i+1}^4+R_{i+1}^2R_{i+2}^2
\end{eqnarray*}
which finally simplifies into:
\begin{eqnarray*}
&&2(R_i+\varepsilon_{i,i+2}R_{i+2})^2R_{i+1}^2-\left(R_i^2+R_{i+1}^2\right)\left(R_{i+1}^2+R_{i+2}^2\right)\\
&=&4\varepsilon_{i,i+2}R_iR_{i+1}^2R_{i+2}-(R_{i+1}^2-R_i^2)(R_{i+1}^2-R_{i+2}^2).
\end{eqnarray*}
\begin{eqnarray}
\label{detaillloooooo}
\end{eqnarray}
Substituting (\ref{detaillloooooo}) on the right side of (\ref{gulgulu}) gives:
\begin{eqnarray*}
&&\varepsilon_{i,i+2}\frac{4\varepsilon_{i,i+2}R_iR_{i+1}^2R_{i+2}+\left(R_i^2-R_{i+1}^2\right)\left(R_{i+1}^2-R_{i+2}^2\right)}{4R_iR_{i+1}^2R_{i+2}}\\
&=&\varepsilon_{i,i+2}\frac{4\varepsilon_{i,i+2}R_iR_{i+1}^2R_{i+2}-(R_{i+1}^2-R_i^2)(R_{i+1}^2-R_{i+2}^2)}{4R_iR_{i+1}^2R_{i+2}}
\end{eqnarray*}
and:
\begin{eqnarray*}
&&\varepsilon_{i,i+2}\frac{4\varepsilon_{i,i+2}R_iR_{i+1}^2R_{i+2}+\left(R_i^2-R_{i+1}^2\right)\left(R_{i+1}^2-R_{i+2}^2\right)}{4R_iR_{i+1}^2R_{i+2}}\\
&=&\varepsilon_{i,i+2}^2-\varepsilon_{i,i+2}\frac{(R_{i+1}^2-R_i^2)(R_{i+1}^2-R_{i+2}^2)}{4R_iR_{i+1}^2R_{i+2}}
\end{eqnarray*}
finally as $\varepsilon_{i,i+2}=\pm1$:
\begin{eqnarray}
&&\frac{4\varepsilon_{i,i+2}^2R_iR_{i+1}^2R_{i+2}+\varepsilon_{i,i+2}\left(R_i^2-R_{i+1}^2\right)\left(R_{i+1}^2-R_{i+2}^2\right)}{4R_iR_{i+1}^2R_{i+2}}\\
&=&1-\varepsilon_{i,i+2}\frac{(R_{i+1}^2-R_i^2)(R_{i+1}^2-R_{i+2}^2)}{4R_iR_{i+1}^2R_{i+2}}.\label{Rightsimp}
\end{eqnarray}
Thus the left side (\ref{leftsidesimp}) and the right side (\ref{Rightsimp}) of (\ref{gulgulu}) can only be equal if:
 \begin{eqnarray*}
&&1-\epsilon_i\epsilon_{i+1}\left|\frac{R_i^2-R_{i+1}^2}{2R_{i}R_{i+1}}\right|\left|\frac{R_{i+1}^2-R_{i+2}^2}{2R_{i+1}R_{j+2}}\right|\\
&=&1-\varepsilon_{i,i+2}\frac{(R_{i+1}^2-R_i^2)(R_{i+1}^2-R_{i+2}^2)}{4R_iR_{i+1}^2R_{i+2}}
\end{eqnarray*}
implying:
\begin{eqnarray*}
\epsilon_{i+1}=\epsilon_isign\left(\varepsilon_{i,i+2}\left(R_{i+1}^2-R_i^2\right)\left(R_{i+1}^2-R_{i+2}^2\right)\right)
\end{eqnarray*}
thus finishing to prove (\ref{rootdphi1},\ref{rootdphi2}).
\item Let $\mathcal{P}^{*}$ the CCAOC mapped onto the plane; it is constituted by $N$ arcs of circle $(\mathcal{A}^{*}_{j})_{j\in\bZ/N\bZ}$ whose centers are noted $(C^{*}_{j})_{j\in\bZ/N\bZ}$ with coordinates $(D_j(\cos(\phi_j),\sin(\phi_j)))_{j\in\bZ/N\bZ}$. Let $j$ in $\bZ/N\bZ$ and consider the two orthogonal circles containing $\mathcal{A}_j^{*}$ and $\mathcal{A}_{j+1}^{*}$ around their intersection. The two circles are orthogonal thus the oriented angle between the radii oriented from the intersection toward the centers are equal to the oriented angles between the tangents:
\begin{eqnarray}
\left(\overrightarrow{P^{*}_{j+1}C^{*}_j},\overrightarrow{P^{*}_{j+1}C^{*}_{j+1}}\right)=\left(\mathcal{A}_j^\prime(\phi^-_{P_j}), \mathcal{A}_{j+1}^\prime(\phi^+_{P_j})\right),
\end{eqnarray}
substituting (\ref{orth90}) gives:
\begin{eqnarray}
\left(\overrightarrow{P^{*}_{j+1}C^{*}_j},\overrightarrow{P^{*}_{j+1}C^{*}_{j+1}}\right)=\frac{\pi}{2}.\label{angle_normal}
\end{eqnarray}
The sum of the interior angle of the quadrangle $C^{*}_jP^{*}_{j+1}C^{*}_{j+1}O$ reads:
\begin{eqnarray*}
\left(\overrightarrow{C^{*}_jO},\overrightarrow{C^{*}_jP^{*}_{j+1}}\right)_i+\left(\overrightarrow{P^{*}_{j+1}C^{*}_j},\overrightarrow{P^{*}_{j+1}C^{*}_{j+1}}\right)_i+\left(\overrightarrow{C^{*}_{j+1}P^{*}_{j+1}},\overrightarrow{C^{*}_{j+1}O}\right)_i+\left(\overrightarrow{OC^{*}_{j+1}},\overrightarrow{OC^{*}_j}\right)_i=2\pi.
\end{eqnarray*}
\begin{eqnarray}
\label{quadrangle}
\end{eqnarray}
Inside a non-intersecting polygon, all the oriented angles going between consecutive sides have the same sign:
\begin{itemize}
\item If $\epsilon_j=1$ then $\delta\phi^*_j>0$, in the quadrangle $C^{*}_jP^{*}_{j+1}C^{*}_{j+1}O$, $$\left(\overrightarrow{OC^{*}_{j+1}},\overrightarrow{OC^{*}_{j}}\right)=-\delta\phi^*_j,$$ it implies $\left(\overrightarrow{C^{*}_jO},\overrightarrow{C^{*}_jP^{*}_{j+1}}\right)$, $\left(\overrightarrow{C^{*}_{j+1}P^{*}_{j+1}},\overrightarrow{C^{*}_{j+1}O}\right)$ and $\left(\overrightarrow{P^{*}_{j+1}C^{*}_j},\overrightarrow{P^{*}_{j+1}C^{*}_{j+1}}\right)$ are also negative quantities. It means:
\begin{eqnarray}
\left(\overrightarrow{OC^{*}_{j+1}},\overrightarrow{OC^{*}_{j}}\right)_i&=&\left \lfloor{-\left(\overrightarrow{OC^{*}_{j+1}},\overrightarrow{OC^{*}_{j}}\right)}\right \rfloor_{[0,2\pi]},\label{angoquadra1}\\
\left(\overrightarrow{C^{*}_jO},\overrightarrow{C^{*}_jP^{*}_{j+1}}\right)_i&=&\left \lfloor{-\left(\overrightarrow{C^{*}_jO},\overrightarrow{C^{*}_jP^{*}_{j+1}}\right)}\right \rfloor_{[0,2\pi]},\label{angoquadra2}\\
\left(\overrightarrow{P^{*}_{j+1}C^{*}_j},\overrightarrow{P^{*}_{j+1}C^{*}_{j+1}}\right)_i&=&\left \lfloor{-\left(\overrightarrow{P^{*}_{j+1}C^{*}_j},\overrightarrow{P^{*}_{j+1}C^{*}_{j+1}}\right)}\right \rfloor_{[0,2\pi]},\label{angoquadra4}\\
\left(\overrightarrow{C^{*}_{j+1}P^{*}_{j+1}},\overrightarrow{C^{*}_{j+1}O}\right)_i&=&\left \lfloor{-\left(\overrightarrow{C^{*}_{j+1}P^{*}_{j+1}},\overrightarrow{C^{*}_{j+1}O}\right)}\right \rfloor_{[0,2\pi]}.\label{angoquadra3}
\end{eqnarray}
As $\delta\phi^*_j$ in $[0,\pi]$, (\ref{angoquadra1}) can be rewritten:
\begin{eqnarray}
\left(\overrightarrow{OC^{*}_{j+1}},\overrightarrow{OC^{*}_{j}}\right)_i&=&\delta\phi^*_j.\label{angoquadra1bon}
\end{eqnarray}
Substituting (\ref{angle_normal}) in (\ref{angoquadra4}) gives:
\begin{eqnarray}
\left(\overrightarrow{P^{*}_{j+1}C^{*}_j},\overrightarrow{P^{*}_{j+1}C^{*}_{j+1}}\right)_i=\frac{3\pi}{2}.\label{angoquadra4bon}
\end{eqnarray}
Using (\ref{angoquadra1bon}) and (\ref{angoquadra4bon}) in (\ref{quadrangle}) gives:
\begin{eqnarray*}
\left(\overrightarrow{C^{*}_jO},\overrightarrow{C^{*}_jP^{*}_{j+1}}\right)_i+\frac{3\pi}{2}+\left(\overrightarrow{C^{*}_{j+1}P^{*}_{j+1}},\overrightarrow{C^{*}_{j+1}O}\right)_i+\delta\phi^*_j=2\pi
\end{eqnarray*}
which rewrites:
\begin{eqnarray}
\left(\overrightarrow{C^{*}_jO},\overrightarrow{C^{*}_jP^{*}_{j+1}}\right)_i+\left(\overrightarrow{C^{*}_{j+1}P^{*}_{j+1}},\overrightarrow{C^{*}_{j+1}O}\right)_i=\frac{\pi}{2}-\delta\phi^*_j.\label{angoquadrabouou}
\end{eqnarray}
Both $\left(\overrightarrow{C^{*}_jO},\overrightarrow{C^{*}_jP^{*}_{j+1}}\right)_i$ and $\left(\overrightarrow{C^{*}_{j+1}P^{*}_{j+1}},\overrightarrow{C^{*}_{j+1}O}\right)_i$ are positive thus as their sum is inferior to $\pi$ they are both inferior to $2\pi$ which means (\ref{angoquadra2},\ref{angoquadra3}) can be rewritten:
\begin{eqnarray}
\left(\overrightarrow{C^{*}_jO},\overrightarrow{C^{*}_jP^{*}_{j+1}}\right)_i&=&-\left(\overrightarrow{C^{*}_jO},\overrightarrow{C^{*}_jP^{*}_{j+1}}\right)\label{angoquadrab2},\\
\left(\overrightarrow{C^{*}_{j+1}P^{*}_{j+1}},\overrightarrow{C^{*}_{j+1}O}\right)_i&=&-\left(\overrightarrow{C^{*}_{j+1}P^{*}_{j+1}},\overrightarrow{C^{*}_{j+1}O}\right).\label{angoquadrab3}
\end{eqnarray}
Substituting (\ref{angoquadrab2},\ref{angoquadrab3}) in (\ref{angoquadrabouou}) gives:
\begin{eqnarray*}
\left(\overrightarrow{C^{*}_jO},\overrightarrow{C^{*}_jP^{*}_{j+1}}\right)+\left(\overrightarrow{C^{*}_{j+1}P^{*}_{j+1}},\overrightarrow{C^{*}_{j+1}O}\right)=\delta\phi^*_j-\frac{\pi}{2}
\end{eqnarray*}
As $\epsilon_j=1$, it can be rewritten:
\begin{eqnarray*}
\left(\overrightarrow{C^{*}_jO},\overrightarrow{C^{*}_jP^{*}_{j+1}}\right)+\left(\overrightarrow{C^{*}_{j+1}P^{*}_{j+1}},\overrightarrow{C^{*}_{j+1}O}\right)=\delta\phi^*_j-\frac{\pi}{2}+\pi(1-\epsilon_j).
\end{eqnarray*}
\begin{eqnarray}
\label{formule_dphid}
\end{eqnarray}
\item If $\epsilon_j=-1$ then $\delta\phi^*_j<0$, in the quadrangle $C^{*}_jP^{*}_{j+1}C^{*}_{j+1}O$, $$\left(\overrightarrow{OC^{*}_{j+1}},\overrightarrow{OC^{*}_{j}}\right)=-\delta\phi^*_j$$ is positive, thus $\left(\overrightarrow{C^{*}_jO},\overrightarrow{C^{*}_jP^{*}_{j+1}}\right)$, $\left(\overrightarrow{C^{*}_{j+1}P^{*}_{j+1}},\overrightarrow{C^{*}_{j+1}O}\right)$ and $\left(\overrightarrow{P^{*}_{j+1}C^{*}_j},\overrightarrow{P^{*}_{j+1}C^{*}_{j+1}}\right)$ are also positive quantities. It means:
\begin{eqnarray}
\left(\overrightarrow{OC^{*}_{j+1}},\overrightarrow{OC^{*}_{j}}\right)_i&=&\left \lfloor{\left(\overrightarrow{OC^{*}_{j+1}},\overrightarrow{OC^{*}_{j}}\right)}\right \rfloor_{[0,2\pi]},\label{angoquadra1po}\\
\left(\overrightarrow{C^{*}_jO},\overrightarrow{C^{*}_jP^{*}_{j+1}}\right)_i&=&\left \lfloor{\left(\overrightarrow{C^{*}_jO},\overrightarrow{C^{*}_jP^{*}_{j+1}}\right)}\right \rfloor_{[0,2\pi]},\label{angoquadra2po}\\
\left(\overrightarrow{P^{*}_{j+1}C^{*}_j},\overrightarrow{P^{*}_{j+1}C^{*}_{j+1}}\right)_i&=&\left \lfloor{\left(\overrightarrow{P^{*}_{j+1}C^{*}_j},\overrightarrow{P^{*}_{j+1}C^{*}_{j+1}}\right)}\right \rfloor_{[0,2\pi]},\label{angoquadra4po}\\
\left(\overrightarrow{C^{*}_{j+1}P^{*}_{j+1}},\overrightarrow{C^{*}_{j+1}O}\right)_i&=&\left \lfloor{\left(\overrightarrow{C^{*}_{j+1}P^{*}_{j+1}},\overrightarrow{C^{*}_{j+1}O}\right)}\right \rfloor_{[0,2\pi]}.\label{angoquadra3po}
\end{eqnarray}
As $\delta\phi^*_j$ in $[-\pi,0]$, (\ref{angoquadra1po}) can be rewritten:
\begin{eqnarray}
\left(\overrightarrow{OC^{*}_{j+1}},\overrightarrow{OC^{*}_{j}}\right)_i&=&-\delta\phi^*_j.\label{angoquadra1bon2}
\end{eqnarray}
Substituting (\ref{angle_normal}) in (\ref{angoquadra4po}) gives:
\begin{eqnarray}
\left(\overrightarrow{P^{*}_{j+1}C^{*}_j},\overrightarrow{P^{*}_{j+1}C^{*}_{j+1}}\right)_i=\frac{\pi}{2}.\label{angoquadra4bon2}
\end{eqnarray}
Using (\ref{angoquadra1bon2}) and (\ref{angoquadra4bon2}) in (\ref{quadrangle}), it reads:
\begin{eqnarray*}
\left(\overrightarrow{C^{*}_jO},\overrightarrow{C^{*}_jP^{*}_{j+1}}\right)_i+\frac{\pi}{2}+\left(\overrightarrow{C^{*}_{j+1}P^{*}_{j+1}},\overrightarrow{C^{*}_{j+1}O}\right)_i-\delta\phi^*_j=2\pi
\end{eqnarray*}
which is equivalent to:
\begin{eqnarray*}
\left(\overrightarrow{C^{*}_jO},\overrightarrow{C^{*}_jP^{*}_{j+1}}\right)+\left(\overrightarrow{C^{*}_{j+1}P^{*}_{j+1}},\overrightarrow{C^{*}_{j+1}O}\right)=\delta\phi^*_j+\frac{3\pi}{2}.
\end{eqnarray*}
As $\epsilon_j=-1$, it can be rewritten:
\begin{eqnarray*}
\left(\overrightarrow{C^{*}_jO},\overrightarrow{C^{*}_jP^{*}_{j+1}}\right)+\left(\overrightarrow{C^{*}_{j+1}P^{*}_{j+1}},\overrightarrow{C^{*}_{j+1}O}\right)=\delta\phi^*_j-\frac{\pi}{2}+\pi(1-\epsilon_j).
\end{eqnarray*}
\begin{eqnarray}
\label{formule_dphidd}
\end{eqnarray}
\end{itemize}
The curvature along the arc $\mathcal{A}_{j}$ is constant equal to $\frac{o_j}{R_j}$ and (\ref{orth90}) tells the oriented angle between $\mathcal{A}_{j}$ and $\mathcal{A}_{j+1}$ is $\frac{\pi}{2}$. The Gauss-Bonnet formula (\cite{Cartan1967}) along the contour of the CCAOC thus reads:
\begin{eqnarray}
\gamma=\sum_{j\in\bZ/N\bZ}\left(\int_{\mathcal{A}_{j}}\frac{o_jds}{R_j}+\frac{\pi}{2}\right).\label{Gauss_bonnet}
\end{eqnarray}
Moreover the definition of a CCAOC specifies the orientation of $\mathcal{P}$ is counterclockwise: if $\rho$ is inferior to 1, for any $j$, $O$ is interior to $\mathcal{C}_{j+1}$ and the orientation of $\mathcal{P}$ implies  $o_j=1$ else if $\rho$ is superior to 1, for any $j$, $O$ is exterior to $\mathcal{C}_{j+1}$  and the orientation of $\mathcal{P}$ implies $o_j=-1$.
\begin{enumerate}
\item If $\rho\leq1$, the length of the CCL $\mathcal{C}_{j}$ is not in general bounded; following the arc $\mathcal{A}_{j}$ on the cone can imply turning several time around the circle $\mathcal{C}^*_{j}$ in the plane. There exists a positive integer (possibly null) $k_j$ (corresponding to the number of completed turns) such as:
\begin{eqnarray}
\int_{\mathcal{A}_{j}}\frac{o_jds}{R_j}=\left(\overrightarrow{C^{*}_jP^{*}_j},\overrightarrow{C^{*}_jP^{*}_{j+1}}\right)+2k_j\pi.\label{Gauss_bonnetrhosup}
\end{eqnarray}
We note:
\begin{eqnarray}
m=\sum_{j\in\bZ/N\bZ} k_j.\label{def_mrho_inf1}
\end{eqnarray}

\item If $\rho>1$, the length of the CCL $\mathcal{C}_{j}$ is bounded and equal to the length of $\mathcal{C}^*_{j}$;  the contour can turn at most one time around $\mathcal{C}^*_{j}$:
\begin{eqnarray}
\int_{\mathcal{A}_{j}}\frac{o_jds}{R_j}=\left(\overrightarrow{C^{*}_jP^{*}_j},\overrightarrow{C^{*}_jP^{*}_{j+1}}\right).\label{Gauss_bonnet0rhoinf}
\end{eqnarray}
We note:
\begin{eqnarray}
m=0.\label{def_mrho_sup1}
\end{eqnarray}
\end{enumerate}
The following sum can be decomposed into:
\begin{eqnarray*}
\sum_{j\in\bZ/N\bZ}\left(\overrightarrow{C^{*}_jP^{*}_j},\overrightarrow{C^{*}_jP^{*}_{j+1}}\right)=\sum_{j\in\bZ/N\bZ}\left(\left(\overrightarrow{C^{*}_jP^{*}_j},\overrightarrow{C^{*}_jO}\right)+\left(\overrightarrow{C^{*}_jO},\overrightarrow{C^{*}_jP^{*}_{j+1}}\right)\right)
\end{eqnarray*}
and rearranged into:
\begin{eqnarray*}
\sum_{j\in\bZ/N\bZ}\left(\overrightarrow{C^{*}_jP^{*}_j},\overrightarrow{C^{*}_jP^{*}_{j+1}}\right)=\sum_{j\in\bZ/N\bZ}\left(\left(\overrightarrow{C^{*}_{j+1}P^{*}_{j+1}},\overrightarrow{C^{*}_{j+1}O}\right)+\left(\overrightarrow{C^{*}_jO},\overrightarrow{C^{*}_jP^{*}_{j+1}}\right)\right).
\end{eqnarray*}
Substituting with (\ref{formule_dphid},\ref{formule_dphidd}) gives:
\begin{eqnarray}
\sum_{j\in\bZ/N\bZ}\left(\overrightarrow{C^{*}_jP^{*}_j},\overrightarrow{C^{*}_jP^{*}_{j+1}}\right)=\sum_{j\in\bZ/N\bZ}\left(\delta\phi^*_j-\frac{\pi}{2}+\pi(1-\epsilon_j)\right).
\end{eqnarray}
Finally substituting into (\ref{Gauss_bonnet}) gives:
\begin{eqnarray*}
\gamma=\sum_{j\in\bZ/N\bZ}\left(\delta\phi^*_j-\frac{\pi}{2}+\pi(1-\epsilon_j)\right)+N\frac{\pi}{2}+2m\pi
\end{eqnarray*}
which simplifies:
\begin{eqnarray*}
\gamma=\sum_{j\in\bZ/N\bZ}\left(\delta\phi^*_j+\pi(1-\epsilon_j)\right)+2m\pi.
\end{eqnarray*}
\end{enumerate}
\end{proof} 

\begin{thebibliography}{3}
\bibitem{Errera1886} L Errera. Sur une condition fondamentale d'equilibre des cellules vivantes. [On a
fundamental condition of equilibrium for living cells.]. C R Hebd Seances Acad Sci 103:822D824 (in French).
\bibitem{Besson2011}  Besson, S.,  Dumais, J. (2011). Universal rule for the symmetric division of plant cells. Proceedings of the National Academy of Sciences, 108(15), 6294-6299.
\bibitem{Louveaux2016} Louveaux, M., Julien, J. D., Mirabet, V., Boudaoud, A.,  Hamant, O. (2016). Cell division plane orientation based on tensile stress in Arabidopsis thaliana. Proceedings of the National Academy of Sciences, 201600677.
\bibitem{Wiener1914} N Wiener. The shortest line dividing an area in a given ratio. Proc Cam Philos Soc. 18:56D58
\bibitem{Wang2015} Y Wang, M Dou, Z Zhou. The fencing problem and Coleochaete cell division. J Math Biol. 4:893-912.
\bibitem{Couturier} E Couturier, A Lisee, P Llanos, S Besson, K Kamrin, J Dumais. In preparation.
\bibitem{Hofmeister1868} W. Hofmeister, Allgemeine Morphologic der Gewachse, Handbuch der Physiologishen Boranik (I Engelman, Leipzig, 1868), pp. 405-664. 
\bibitem{Iterson1907}G. Van lterson, Mathematische und Microscopisch Anatomische Studien uber Blattstellungen, nebst Betrschungen uber den Schalenbau der Miliolinen (Gustav-Fisher-Verlag, lena, 1907).
\bibitem{Cartan1967} H. Cartan, Cours de calcul différentiel (Hermann, 1967). (in French).
\end{thebibliography}
\end{document}